\documentclass[11pt]{article}

\usepackage[a4paper,margin=1in]{geometry}
\usepackage{amsmath,amssymb,amsthm,mathtools}
\usepackage{microtype}
\usepackage[colorlinks=true,linkcolor=blue,citecolor=blue,urlcolor=blue]{hyperref}
\usepackage{hyperref}
\usepackage{cite}
\newtheorem{theorem}{Theorem}
\newtheorem{lemma}{Lemma}
\newtheorem{proposition}{Proposition}
\newtheorem{corollary}{Corollary}
\newtheorem{remark}{Remark}

\DeclareMathOperator{\rank}{rank}

\title{A Slice-Rank Drift Bound for Random Quantum \(k\)-SAT}
\author{Jean Bernoulli Ravelomanana}
\date{}

\begin{document}

\maketitle

\begin{abstract}
	Random quantum satisfiability is a natural quantum analogue of random
	constraint satisfaction and a basic model for frustration-free local
	Hamiltonians.  Despite extensive work on its satisfiable and unsatisfiable
	regimes, the quantitative location of the random quantum \(k\)-SAT threshold
	has remained poorly understood, with the best general upper bounds leaving a
	large gap to the known lower bounds.  In this paper we prove a new upper bound
	on the satisfiability threshold of random quantum \(k\)-SAT.  Our result
	improves the previously known asymptotic upper bound by a factor of order
	\(k\), giving a bound of order \(2^k/k\).  The improvement is also significant
	at small values of \(k\); in particular, for random quantum \(3\)-SAT we obtain
	a substantially smaller explicit upper bound than the one previously
	available.  The proof combines the geometric formulation of generic quantum
	satisfiability with a dimension-decay analysis of the full satisfying
	subspace.  The key input is a multiplicative Shearer-type inequality for
	tensor-product subspaces, which quantifies how global dimension forces
	nontrivial local dimension on typical sets of qubits.
\end{abstract}

\section{Introduction}\label{sec:introduction}
\subsection{From random \(k\)-SAT to random quantum \(k\)-SAT}
\label{subsec:motivation-history}

Random constraint satisfaction problems form a fruitful area of research at the
intersection of probability, combinatorics, theoretical computer science and
statistical physics.  The paradigmatic example is random \(k\)-SAT, in which
one asks whether a random Boolean formula built from clauses involving \(k\)
variables admits a satisfying assignment.  

The passage from worst-case
complexity to random models is central here: worst-case
\(\mathrm{NP}\)-completeness shows that \(k\)-SAT is computationally hard in
general, but it does not describe the behaviour of typical formulas, nor how
satisfiability depends on the density of constraints.  Random \(k\)-SAT
addresses this question by using the clause density as a control parameter.
As this density increases, typical instances pass from satisfiable to
unsatisfiable, and the set of solutions changes abruptly from exponentially
large to empty.  This SAT--UNSAT transition has motivated a long line of
rigorous and non-rigorous work on thresholds, the geometry of the solution
space and algorithmic barriers
\cite{MezardMontanari09,Friedgut99,AchlioptasPeres04,DingSlySun22,AchlioptasMoore06,MezardParisiZecchina02,MertensMezardZecchina06, KrzakalaMontanariRicciSemerjianZdeborova07,AchlioptasCojaOghlan08,KirkpatrickSelman94,MonassonZecchina99}.

Quantum \(k\)-SAT is the natural quantum analogue of this picture.  It was
introduced by Bravyi~\cite{Bravyi06} as a quantum analogue of classical
\(k\)-SAT and as a canonical satisfiability problem for frustration-free local
Hamiltonians.  In quantum \(k\)-SAT, clauses are replaced by local projectors
acting on \(k\) qubits, and satisfying assignments are replaced by common
zero-energy states.  Thus a quantum \(k\)-SAT instance is satisfiable precisely
when the corresponding local Hamiltonian is frustration-free.  From the
viewpoint of computational complexity, Bravyi proved that quantum \(2\)-SAT is
solvable in polynomial time, whereas quantum \(k\)-SAT is
\(\mathrm{QMA}_1\)-complete for \(k\geq 4\)~\cite{Bravyi06}.  The corresponding
hardness result for \(k=3\) was later proved by Gosset and
Nagaj~\cite{GossetNagaj16}.  Here \(\mathrm{QMA}_1\) is the quantum analogue of
\(\mathrm{NP}\).\footnote{A yes-instance of a problem in \(\mathrm{QMA}\)
	admits a polynomial-size quantum witness that can be checked by a
	polynomial-time quantum verifier.  The subscript in \(\mathrm{QMA}_1\)
	indicates one-sided completeness, meaning that yes-instances are accepted with
	probability one.}

The random quantum model is motivated by the analogous passage from worst-case
complexity to typical behaviour.  Once quantum \(k\)-SAT has been identified
as a quantum analogue of classical \(k\)-SAT, it is natural to ask what happens
for typical quantum instances.  Random quantum \(k\)-SAT provides such an
ensemble: the density of local projectors plays the role of the clause
density, and the central question becomes the typical transition between
frustration-free and frustrated quantum systems.  This makes random quantum
\(k\)-SAT one of the simplest random models in which constraint-satisfaction
phase transitions, local Hamiltonian complexity and the geometry of many-body
Hilbert space meet.

The random quantum \(k\)-SAT model, including its generic formulation, was
introduced by Laumann et al.~\cite{LMS10,LLMSS10}.  In the random generic
model, one first chooses a random \(k\)-uniform interaction hypergraph and then
assigns a generic local projector to each hyperedge.  A crucial feature of this
model is that, once the hypergraph is fixed, the dimension of the satisfying
subspace is almost surely independent of the particular choice of generic
projectors.  The random quantum \(k\)-SAT threshold is therefore controlled by
how local constraints reduce the dimension of a tensor-product subspace.

The quantum problem is not merely a formal variant of classical \(k\)-SAT.  A
classical instance asks for a point in a discrete cube satisfying a collection
of local Boolean constraints.  A quantum instance asks instead for a nonzero
vector in the intersection of many local kernels inside the tensor product
space \((\mathbb C^2)^{\otimes n}\).  The constraints are local, but the object
whose dimension must be controlled is global and linear-algebraic.  This
tension between locality, randomness and tensor-product geometry is the main
source of difficulty in obtaining quantitative bounds for random quantum
satisfiability.  The next question is therefore what rigorous bounds are known
for this transition, and how far they are from locating the threshold.

\subsection{Known bounds and the remaining gap}
\label{subsec:known-bounds}

The satisfiable side of the transition is governed by quantum analogues of the
Lov\'asz local lemma.  Ambainis, Kempe and Sattath proved a quantum Lov\'asz
local lemma and applied it to random quantum \(k\)-SAT, obtaining a satisfiable
bound of order \(2^k/k^2\)~\cite{AKS}.  This line of work was later refined
through Shearer-type bounds and their connection with hard-core lattice gases
\cite{SMLM16}.  The tightness of Shearer's bound for the quantum Lov\'asz local
lemma was subsequently established in~\cite{HLSZ19}.  These results give
rigorous lower bounds on the satisfiable region and show that random quantum
\(k\)-SAT remains satisfiable at densities growing exponentially with \(k\),
although polynomially below \(2^k\).

A more refined question, already emphasized in the early work on the random
generic model~\cite{LLMSS10}, is whether the satisfying state can be chosen
without entanglement.  In other words, one asks whether there is a satisfying
state obtained by assigning an individual one-qubit state to each vertex and
taking their tensor product.  Such an instance is called product satisfiable.
Product satisfiability is stronger than satisfiability, but it captures an
important part of the geometry of the random model and provides a bridge
between quantum constraints and classical combinatorial structures.  In the
random generic model, the disappearance of product satisfying states is
governed by a matching, or dimer-covering, condition on the underlying
interaction hypergraph.  In recent work by Lee et al.~\cite{LMRV24}, the
product-satisfiable phase was revisited and an alternative proof of the
PRODSAT threshold was given, based on this dimer-covering formulation.

The product-satisfiability threshold is expected to occur before the full
satisfiability threshold.  Consequently, above the PRODSAT threshold but below
the quantum \(k\)-SAT threshold, one is led to a genuinely quantum regime: if
satisfying states still exist, then they cannot be product states and must be
entangled.  This possible intermediate ENTSAT phase is one of the distinctive
features of random quantum satisfiability.  Numerical work for random
\(3\)-QSAT suggests that the situation at small \(k\) may be more delicate,
and that the product-satisfiability and full satisfiability transitions may
possibly coincide in that case~\cite{MorampudiHSLM17}.
Nevertheless, the distinction between product satisfiability and full quantum
satisfiability explains why controlling the full satisfying subspace is
substantially more difficult than controlling product solutions alone.

On the unsatisfiable side, the strongest general estimates before the present
work were obtained by Bravyi, Moore and Russell~\cite{BMR}.  Their method gives
an upper bound of order \(2^k\) on the random quantum \(k\)-SAT threshold.  In
the first nontrivial case \(k=3\), they proved the explicit bound
\[
\alpha^{\mathrm{QSAT}}_3 \leq 3.594 .
\]
Thus the rigorous asymptotic picture contained a large window: random quantum
\(k\)-SAT was known to be satisfiable up to densities of order \(2^k/k^2\),
while the best general unsatisfiable bound was only of order \(2^k\).  Between
these two scales lies the difficult question of how long the full satisfying
subspace survives after product satisfying states have disappeared.

The present paper addresses this UNSAT-side problem directly.  Rather than
restricting attention to product states or to special local obstructions, we
study the decay of the full satisfying subspace under the addition of random
generic constraints.  This leads to a new general upper bound of order
\(2^k/k\) for the random quantum \(k\)-SAT threshold, improving the previous
asymptotic upper bound by a factor of order \(k\).  In the first nontrivial
case \(k=3\), the method gives a significantly better explicit upper bound
than the previous value \(3.594\) from~\cite{BMR}.

\subsection{Organization of the paper}\label{subsec:organization}

The rest of the paper is organized as follows.  Section~\ref{sec:model-main-strategy}
defines the random generic quantum \(k\)-SAT model, states the main
theorem, and gives the proof strategy.  Section~\ref{sec:contraction-ranks}
identifies the dimension lost when a generic constraint is imposed with a
contraction rank.  Section~\ref{sec:slice-rank} proves the multiplicative
slice-rank inequality and derives the local-rank estimates for random
\(k\)-sets.  Section~\ref{sec:dimension-drift-proof} turns these estimates into
a logarithmic dimension drift, integrates the drift by a stopped martingale
argument, transfers the result to the model
with distinct supports, and completes the proof of Theorem~\ref{thm:main}.

\section{Model, main result and proof strategy}\label{sec:model-main-strategy}

\subsection{Definition of the random quantum \(k\)-SAT model}
\label{subsec:model}

We now fix the model used in the main theorem.  We introduce it by starting
from the classical \(k\)-SAT problem and then passing to its quantum analogue.
In classical \(k\)-SAT, one considers Boolean variables
\(x_1,\ldots,x_n\in\{0,1\}\) and a formula
\[
F = \bigwedge_{a=1}^m C_a
\]
in conjunctive normal form.  Each clause \(C_a\) is a disjunction of \(k\)
literals involving \(k\) of the variables.  Such a clause forbids exactly one
of the \(2^k\) possible local assignments of these variables.  Equivalently,
one may associate with \(F\) the non-negative classical Hamiltonian
\[
h_F(x_1,\ldots,x_n)
=
\sum_{a=1}^m
\mathbf 1_{\{C_a \text{ is violated by } x\}} .
\]
The formula is satisfiable if and only if this Hamiltonian has a zero-energy
configuration:
\[
\min_{x\in\{0,1\}^n} h_F(x)=0 .
\]

Quantum \(k\)-SAT is obtained by replacing Boolean variables by qubits and
forbidden local assignments by forbidden local quantum states.  The state space
of \(n\) qubits is
\[
\mathcal H_n = (\mathbb C^2)^{\otimes n}.
\]
For a set \(T\subset[n]\), we write
\[
\mathcal H_T = (\mathbb C^2)^{\otimes T},
\qquad
\bar T=[n]\setminus T,
\]
so that \(\mathcal H_n=\mathcal H_T\otimes\mathcal H_{\bar T}\).

In this paper, a quantum \(k\)-SAT constraint supported on a set
\(T\subset[n]\), with \(|T|=k\), is a rank-one constraint specified by a
nonzero vector \(\xi_T\in\mathcal H_T\).  It forbids the one-dimensional
local direction spanned by \(\xi_T\) on the \(k\) qubits in \(T\), and acts
trivially on all other qubits.  Higher-rank local projectors can also be
considered, but throughout this paper quantum \(k\)-SAT refers to the
standard rank-one model.  After normalizing \(\xi_T\), the corresponding
projector is
\[
\Pi_T = |\xi_T\rangle\langle \xi_T|\otimes I_{\bar T}.
\]
Thus a collection of constraints \(T_1,\ldots,T_m\), together with local
vectors \(\xi_{T_a}\in\mathcal H_{T_a}\), defines the local Hamiltonian
\[
H = \sum_{a=1}^m \Pi_{T_a}.
\]
The instance is satisfiable if there exists a nonzero state
\(\psi\in\mathcal H_n\) annihilated by every local projector:
\[
\Pi_{T_a}\psi=0,
\qquad a=1,\ldots,m.
\]
Equivalently, if
\[
\mathcal S
=
\bigcap_{a=1}^m \ker \Pi_{T_a},
\]
then the instance is satisfiable precisely when
\[
\mathcal S\neq\{0\}.
\]
Since the projectors are positive semidefinite, this is the same as saying that
the Hamiltonian \(H\) has a nonzero zero-energy ground space:
\[
\mathcal S=\ker H .
\]
Thus quantum \(k\)-SAT is the frustration-free satisfiability problem for
local Hamiltonians with rank-one constraints.

This construction extends classical \(k\)-SAT in a direct way.  A classical
clause forbids one of the \(2^k\) local Boolean assignments.  A quantum
constraint in the rank-one model forbids one direction in the
\(2^k\)-dimensional Hilbert space \(\mathcal H_T\).  When all forbidden local vectors are computational
basis vectors, the Hamiltonian above is diagonal in the computational basis and
reduces to a classical \(k\)-SAT cost function.  For generic vectors, however,
the constraints are not diagonal, and the satisfying space is a genuine linear
subspace of the full tensor-product Hilbert space.

In this paper we study the random generic model.  The supports
\(T_1,\ldots,T_m\) are sampled independently and uniformly from the
\(k\)-subsets of \([n]\).  Conditional on these supports, the local constraints
are chosen generically: equivalently, the covectors
\[
\varphi_{T_a}=\langle \xi_{T_a}|
\in \mathcal H_{T_a}^*
\]
are sampled independently from distributions that are absolutely continuous
with respect to Lebesgue measure in any system of linear coordinates.  The
constraint imposed on a global state \(\psi\in\mathcal H_n\) may then be
written as
\[
(\varphi_{T_a}\otimes I_{\bar T_a})\psi=0,
\qquad a=1,\ldots,m .
\]
The normalization of the local vectors is irrelevant, since only the kernels of
the projectors matter.

Equivalently, the random supports \(T_1,\ldots,T_m\) define a random
\(k\)-uniform interaction hypergraph on the vertex set \([n]\), whose
hyperedges indicate which qubits participate in each local constraint.  In the
with-replacement model used in the proof, this is a \(k\)-uniform
multihypergraph, since the same support may appear more than once.  The
corresponding model with distinct supports is the usual simple random
\(k\)-uniform hypergraph model.  The distinction is immaterial for the
asymptotic upper bound proved here and is addressed in the final transfer
argument.

Throughout the paper, \(k\geq 3\) is fixed and \(m=\alpha n\).  All almost-sure
statements concerning the local constraints are with respect to the generic
choice of covectors.  We use ``with high probability'' to mean with probability
tending to one as \(n\to\infty\).

\subsection{Main theorem}
\label{subsec:main-theorem}

We now state the main result.  It is useful first to introduce threshold
notation that does not presuppose the existence of a sharp threshold.  Define
\[
\alpha_{\mathrm{sat}}(k)
=
\sup\Bigl\{
\alpha\geq 0:
\mathbb P\bigl(\mathcal S\neq\{0\}\bigr)\to 1
\text{ for } m=\alpha n
\Bigr\}
\]
and
\[
\alpha_{\mathrm{unsat}}(k)
=
\inf\Bigl\{
\alpha\geq 0:
\mathbb P\bigl(\mathcal S=\{0\}\bigr)\to 1
\text{ for } m=\alpha n
\Bigr\}.
\]
Here the probabilities refer to the random generic quantum
\(k\)-SAT model of Section~\ref{subsec:model}.  Thus
\[
\alpha_{\mathrm{sat}}(k)
\leq
\alpha_{\mathrm{unsat}}(k).
\]
If the model has a sharp satisfiability threshold, then these two quantities
coincide; their common value is the usual critical density, which we denote by
$
\alpha_c(k).
$

For \(k\geq 3\), set
\[
\alpha_\star(k)
=
\int_0^1
\frac{ds}{
	\log_2\left(\frac{1}{1-2^{-ks}}\right)
} .
\]

\begin{theorem}[Random quantum \(k\)-SAT upper bound]
	\label{thm:main}
	Fix \(k\geq 3\).  Consider the random generic quantum \(k\)-SAT
	instance on \(n\) qubits obtained by sampling \(m\) independent constraints as
	in Section~\ref{subsec:model}.  If, for some \(\varepsilon>0\),
	\[
	m\geq \bigl(\alpha_\star(k)+\varepsilon\bigr)n ,
	\]
	then the instance is unsatisfiable with high probability as \(n\to\infty\).
	Equivalently,
	\[
	\alpha_{\mathrm{unsat}}(k)
	\leq
	\alpha_\star(k).
	\]
	The same conclusion holds for the corresponding model with distinct supports,
	or equivalently for the simple random \(k\)-uniform interaction hypergraph
	model.  Moreover,
	\[
	\alpha_\star(k)
	\sim
	\frac{2^k}{k}
	\qquad\text{as } k\to\infty .
	\]
\end{theorem}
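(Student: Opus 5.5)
We track $D_t=\dim\mathcal S_t$, the satisfying subspace after the first $t$ generic constraints, through its normalized logarithm $s_t=\log_2 D_t/n\in[0,1]$. Then $s_0=1$, and the instance is unsatisfiable once $D_t=0$. The plan is to show that while $s_t\approx s$, each new constraint lowers $\log_2 D_t$ in conditional expectation by at least roughly $\log_2\bigl(1/(1-2^{-ks})\bigr)$. If so, the number of constraints needed to drive $s$ from $1$ to $0$ is about $n\int_0^1 ds/\log_2\bigl(1/(1-2^{-ks})\bigr)=\alpha_\star(k)n$.

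\textbf{Step 1 (dimension loss as a contraction rank).} Fix the current subspace $\mathcal S\subset\mathcal H_n$ and a support $T$. A generic covector $\varphi_T$ gives the new subspace $\ker(\varphi_T\otimes I)|_{\mathcal S}$. Its dimension drops by exactly the rank of the map $\mathcal S\to\mathcal H_{\bar T}$, $\psi\mapsto(\varphi_T\otimes I)\psi$. For generic $\varphi_T$ this rank equals the maximum over all $\varphi$, which is the "local rank" $r_T(\mathcal S)$. The latter is at least the dimension of the image of $\mathcal S$ in $\mathcal H_{\bar T}$ divided by nothing smaller than $\dim\mathcal S/2^k$: viewing $\mathcal S\subset\mathcal H_T\otimes\mathcal H_{\bar T}$, some slice $\varphi$ retains a fraction of the $\bar T$-rank. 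Hence $D_{t+1}=D_t-r_{T}(\mathcal S_t)$ almost surely.

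\textbf{Step 2 (multiplicative Shearer/slice-rank inequality).} The goal is to prove that for uniformly random $T$,
\[
\mathbb E_T\log_2\frac{D}{D-r_T(\mathcal S)}\;\ge\;\log_2\frac{1}{1-2^{-k s}},\qquad D=2^{sn}.
\]
My approach is a Shearer-type bound: $\dim\mathcal S\le\prod_T(\text{local dimension on }T)^{\text{weight}}$ when averaged over the uniform cover of $[n]$ by $k$-sets. In the form I want it, this says the local ranks cannot all be small. Roughly, $r_T/D\ge 2^{-k}\cdot$(effective local dimension of $\mathcal S$ on $T$), and the geometric mean of the effective local dimension over random $T$ is at least $2^{ks}$. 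I would apply Jensen's inequality to the convex function $x\mapsto\log_2\bigl(1/(1-2^{-x})\bigr)$ to pass from this geometric-mean statement to the expected log drift. Establishing this inequality for tensor subspaces, rather than for classical sets, is the main obstacle. I would try to prove it by induction on $n$ via flattenings, using that restricting $\varphi$ to product covectors is lossless generically.

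\textbf{Step 3 (integration and transfer).} Set $f(s)=\log_2\bigl(1/(1-2^{-ks})\bigr)$ and $\Phi(s)=\int_s^1 du/f(u)$. Then $\Phi(s_t)-t/n$ is a supermartingale up to $o(1/n)$ errors, with increments bounded by $O(1/n)$ while $s$ stays above a small $\delta$. Azuma's inequality with stopping gives $s_t\le\delta$ once $t\ge(\alpha_\star-\Phi(\delta)+\varepsilon/2)n$. For the final stretch I would use the crude bound that each constraint typically removes a constant fraction when $D$ is small, which costs only $O(\delta n)$ further constraints. Passing to distinct supports is handled by noting that only $O(1)$ repeats occur in expectation, and duplicates are harmless by monotonicity.

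\textbf{Asymptotics.} Substituting $u=ks$ gives $\alpha_\star=\frac1k\int_0^k du/\log_2\bigl(1/(1-2^{-u})\bigr)$. For large $u$ the integrand is about $2^u\ln2$, so the integral is dominated by its upper end and equals roughly $2^k$. This yields $\alpha_\star\sim 2^k/k$.
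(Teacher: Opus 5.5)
Your outer architecture matches the paper: Step~1, Jensen applied to the convex decreasing $g$, integrating the drift, and the asymptotics. The gap is in Step~2, which carries essentially all of the content of the theorem. You state the needed bound $\mathbb E_T\log_2\frac{D}{r_T(\mathcal S)}\le ks$ but do not prove it, and the route you sketch cannot produce it.

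Your two intermediate claims, $r_T/D\ge 2^{-k}E_T$ and $\mathbb E_T\log_2 E_T\ge ks$, are incompatible whatever $E_T$ means. For $\mathcal S=\mathcal H_n$ one has $s=1$ and $r_T=2^{n-k}$, so the first claim forces $E_T\le 1$, while the second demands a geometric mean of at least $2^k$.

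More fundamentally, a Shearer bound on the local supports $V_T\subseteq\mathcal H_T$ points the wrong way. The only link between supports and contraction ranks is $D\le \dim V_T\cdot r_T$, because slices along a basis of $V_T^*$ jointly inject $\mathcal S$. So you would need an \emph{upper} bound on $\dim V_T$, whereas Shearer gives a lower bound. Such an upper bound is false: a single generic entangled state has $s=0$ but $\dim V_T=2^k$.

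Your remark that restricting to product covectors is ``lossless generically'' is also false. Take $\mathcal S=|\Phi^+\rangle_{13}\otimes\mathbb C^2_2$ and $T=\{1,2\}$. In the basis $|\Phi^+\rangle_{13}\otimes|j\rangle_2$ the contraction matrix is $(\varphi_{aj})_{a,j}$ with $\varphi_{aj}=\varphi(|a\rangle_1|j\rangle_2)$. It has generic rank $2$ but only rank $1$ on product covectors. Only the inequality $r_T\ge(\text{product-covector rank})$ holds, and fortunately that is the direction one needs.

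The missing idea is a Loomis--Whitney inequality in which the ``projections'' go onto the complements $\bar T$ and their sizes are the contraction ranks themselves: $\prod_{i=1}^n r_i(W)\ge(\dim W)^{n-1}$. The paper proves this by induction. It splits off one qubit and normalizes its maximal-rank slice to $\langle 0|$, so that $r\ge D/2$ and
$W=\{|0\rangle\otimes a+|1\rangle\otimes(Fa+b):a\in A,\,b\in B\}$ with $\dim A=r$ and $\dim B=D-r$. This decomposition gives the superadditivity $r_i(W)\ge r_i(A)+r_i(B)$. The inductive step then closes using:
\begin{itemize}
\item the induction hypothesis on $A$ and $B$;
\item Minkowski's inequality for geometric means;
\item the scalar inequality $x\bigl(x^{(n-2)/(n-1)}+(1-x)^{(n-2)/(n-1)}\bigr)^{n-1}\ge1$ on $[1/2,1]$.
\end{itemize}
Contracting one site at a time along a uniformly random ordered $k$-tuple then yields the $k$-set bound.

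Two smaller points also need repair. First, in Step~3 the increments are not $O(1/n)$ while $s\ge\delta$. The subspace $\mathcal S=|\phi\rangle_{T_0}\otimes\mathcal H_{\bar T_0}$ has $s\approx1$ but is killed outright when $T=T_0$. You must therefore truncate the drop. Since $\min(g,C)$ is not convex, Jensen has to be applied to a convex nonincreasing minorant before letting $C\to\infty$. Second, in your transfer to distinct supports, duplicated supports are not harmless in the with-replacement model. They carry independent covectors and so impose genuine extra constraints. The transfer should instead use that a repeat occurs with probability $O(n^{2-k})=o(1)$.
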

Thus, if a sharp satisfiability threshold exists, Theorem~\ref{thm:main}
implies
\[
\alpha_c(k)
\leq
\alpha_\star(k)
\sim
\frac{2^k}{k}.
\]
This should be compared with the previous general upper bounds for random quantum \(k\)-SAT, which were of order \(2^k\). 
For the first nontrivial value \(k=3\), the estimate gives a substantial
numerical improvement.

\begin{corollary}[The three-local case]
	\label{cor:three-local}
	For random generic quantum \(3\)-SAT,
	\[
	\alpha_{\mathrm{unsat}}(3)
	\leq
	\alpha_\star(3)
	=
	\int_0^1
	\frac{ds}{
		\log_2\left(\frac{1}{1-2^{-3s}}\right)
	}
	\approx 1.947 .
	\]
	In particular, if a sharp threshold exists, then
	\[
	\alpha_c(3)
	\leq
	1.947 .
	\]
\end{corollary}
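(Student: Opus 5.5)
The corollary is the specialization of Theorem~\ref{thm:main} to \(k=3\), so the plan has two parts. First, invoke Theorem~\ref{thm:main} with \(k=3\) to get \(\alpha_{\mathrm{unsat}}(3)\le\alpha_\star(3)\). Second, evaluate the integral rigorously enough to certify \(\alpha_\star(3)\le 1.947\), or at least to justify the approximation. The statement about \(\alpha_c(3)\) then follows from the observation after the theorem: if the threshold is sharp, \(\alpha_c=\alpha_{\mathrm{unsat}}\).

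For the integral, write \(f(s)=1/\log_2\bigl(1/(1-2^{-3s})\bigr)=\ln 2/\bigl(-\ln(1-8^{-s})\bigr)\).

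Near \(s=0\) we have \(8^{-s}\to 1\), so \(-\ln(1-8^{-s})\to\infty\) and \(f(s)\to 0\). More precisely, \(1-8^{-s}\approx 3s\ln 2\), hence
\[
f(s)\approx \frac{\ln 2}{\ln\bigl(1/(3s\ln 2)\bigr)}.
\]
This vanishes logarithmically, so \(f\) is continuous on \([0,1]\) and the integral is proper.

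The function \(f\) is increasing in \(s\). As \(s\) grows, \(8^{-s}\) decreases, so \(-\ln(1-8^{-s})\) decreases. At \(s=1\), using \(-\ln(7/8)\approx 0.13353\), we get \(f(1)=\ln 2/0.13353\approx 5.19\).

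Because \(f\) is monotone, upper and lower Riemann sums on a uniform partition give certified two-sided bounds. The gap between them is at most \((f(1)-f(0))/N\approx 5.19/N\). Reaching accuracy \(10^{-3}\) therefore needs \(N\) of a few thousand, which is a routine exact computation.

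Alternatively, substitute \(u=8^{-s}\), so \(ds=-du/(u\ln 8)\). This gives
\[
\alpha_\star(3)=\frac{1}{3}\int_{1/8}^{1}\frac{du}{u\,\bigl(-\ln(1-u)\bigr)}.
\]
The integrand is smooth on \([1/8,1)\). Near \(u=1\) it behaves like \(1/(-\ln(1-u))\), which tends to \(0\) and is integrable, so a simple quadrature with an explicit error bound applies.

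The step most likely to cause trouble is the logarithmic behaviour near the endpoint: at \(s=0\) in the first form, or \(u=1\) in the second. There the derivative of \(f\) blows up, so Simpson-type error bounds fail. I would handle it by splitting off \([0,\delta]\) and bounding that piece by \(\delta f(\delta)\), using monotonicity. On \([\delta,1]\) I would use monotone Riemann sums. Taking \(\delta\) small, around \(10^{-4}\), makes the tail contribution negligible. This certifies \(\alpha_\star(3)\approx1.947\), and combined with Theorem~\ref{thm:main} it proves the corollary.
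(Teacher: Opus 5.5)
Your plan matches the paper's: the corollary is Theorem~\ref{thm:main} with \(k=3\), the paper simply reports \(\alpha_\star(3)\approx 1.947\) from a one-dimensional numerical integration, and your monotone Riemann-sum scheme adds certification that the paper does not provide. One quantitative caution: \(\alpha_\star(3)\approx 1.94690\) lies only about \(10^{-4}\) below \(1.947\), so accuracy \(10^{-3}\) (a few thousand intervals) cannot certify \(\alpha_c(3)\le 1.947\), and with your gap bound \(5.19/N\) you need \(N\) of order \(5\cdot 10^{4}\), which is still routine.
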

This improves the previous upper bound
\[
\alpha_c(3)\leq 3.594
\]
of Bravyi, Moore and Russell~\cite{BMR}.
Combining Theorem~\ref{thm:main} with the quantum Lov\'asz local lemma gives
the following updated rigorous window.

\begin{corollary}[Known asymptotic window]
	\label{cor:window}
	For random generic quantum \(k\)-SAT,
	\[
	\Omega\!\left(\frac{2^k}{k^2}\right)
	\leq
	\alpha_{\mathrm{sat}}(k)
	\leq
	\alpha_{\mathrm{unsat}}(k)
	\leq
	\alpha_\star(k)
	\sim
	\frac{2^k}{k}.
	\]
	Consequently, if a sharp satisfiability threshold exists, then its critical
	density satisfies
	\[
	\Omega\!\left(\frac{2^k}{k^2}\right)
	\leq
	\alpha_c(k)
	\leq
	\alpha_\star(k)
	\sim
	\frac{2^k}{k}.
	\]
\end{corollary}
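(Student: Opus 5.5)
The corollary follows by combining Theorem~\ref{thm:main} with the known satisfiable-side bound and the elementary inequality \(\alpha_{\mathrm{sat}}(k)\leq\alpha_{\mathrm{unsat}}(k)\) recorded after the threshold definitions. The right-hand inequalities \(\alpha_{\mathrm{unsat}}(k)\leq\alpha_\star(k)\sim 2^k/k\) are exactly the content of Theorem~\ref{thm:main}, so no new work is needed there.

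For the lower bound \(\alpha_{\mathrm{sat}}(k)=\Omega(2^k/k^2)\), I will invoke the quantum Lov\'asz local lemma of Ambainis, Kempe and Sattath~\cite{AKS}. Their theorem states that there is an absolute constant \(c>0\) such that random quantum \(k\)-SAT is satisfiable with high probability whenever \(m\leq c\,2^k n/k^2\). This holds for arbitrary rank-one projectors, hence in particular for generic ones. The only point to check is that their random hypergraph model agrees with ours up to a negligible change. Their argument uses only local degree and overlap statistics of the interaction hypergraph, after deleting a small number of high-degree or badly overlapping constraints. Those statistics are asymptotically the same in the with-replacement model and in the distinct-support model: with \(m=O(n)\), the expected number of repeated supports is \(O(m^2/\binom nk)=o(1)\) for \(k\geq3\). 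Moreover, removing constraints can only enlarge \(\mathcal S\). This gives \(c\,2^k/k^2\leq\alpha_{\mathrm{sat}}(k)\).

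For completeness I would also justify \(\alpha_{\mathrm{sat}}(k)\leq\alpha_{\mathrm{unsat}}(k)\). Couple the instances with \(m\) and \(m'\geq m\) constraints by adding constraints, so that \(\mathcal S\) can only shrink; hence \(\mathbb P(\mathcal S\neq\{0\})\) is nonincreasing in \(m\). Suppose \(\alpha_{\mathrm{unsat}}<\alpha'<\alpha_{\mathrm{sat}}\). Then monotonicity forces both \(\mathbb P(\mathcal S\neq\{0\})\to1\) and \(\mathbb P(\mathcal S=\{0\})\to1\) at density \(\alpha'\), a contradiction. Chaining the three inequalities gives the first display.

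If a sharp threshold exists, then by definition \(\alpha_c(k)=\alpha_{\mathrm{sat}}(k)=\alpha_{\mathrm{unsat}}(k)\), and the second display follows immediately. The only potentially delicate step is the model-matching for the AKS bound, and it is routine for the reasons given above.
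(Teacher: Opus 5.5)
Your proposal is correct and takes the same route as the paper: the corollary follows by combining Theorem~\ref{thm:main} with the \(\Omega(2^k/k^2)\) satisfiability bound of Ambainis--Kempe--Sattath~\cite{AKS} and the inequality \(\alpha_{\mathrm{sat}}(k)\leq\alpha_{\mathrm{unsat}}(k)\). The paper only asserts that inequality, and your monotone-coupling proof of it is valid. In the model-matching paragraph, all you actually need is that the AKS random model agrees with this one up to an event of probability \(o(1)\), or contains the present instance as a sub-instance of a satisfiable one under a coupling. Your description of deletions inside the AKS argument is unnecessary, and "removing constraints can only enlarge \(\mathcal S\)" helps only in that sub-instance direction, not for transferring satisfiability from a reduced instance back to the full one.
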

The corollary records the resulting rigorous window for the random generic
model.  We next explain the main idea behind the proof and the sense
in which the dimension-decay method differs from the previous approaches to
UNSAT-side bounds.

\subsection{Proof strategy and novelty of the method}
\label{subsec:discussion}

We briefly explain the structure of the proof of Theorem~\ref{thm:main}.  The
argument proceeds in three steps.  First, we translate the effect of adding a
generic rank-one constraint into a linear-algebraic dimension drop.  This is
done in Section~\ref{sec:contraction-ranks} through contraction ranks.  Second,
we prove a deterministic tensor inequality, the multiplicative slice-rank
inequality, which forces a subspace of large global dimension to have large
generic contractions on typical local sets of coordinates.  This is the content
of Sections~\ref{sec:slice-rank} and~\ref{sec:shadow}.  Third, we turn these
local contraction estimates into a drift estimate for the logarithm of the
dimension of the satisfying subspace.  The drift is integrated by a stopped
martingale argument in Sections~\ref{sec:drift} and~\ref{sec:martingale},
and the final technical reductions are completed in
Section~\ref{sec:completion}.

We now describe the key mechanism.  Suppose that some constraints have already
been imposed, and let
\[
W\subseteq \mathcal H_n=(\mathbb C^2)^{\otimes n}
\]
be the current satisfying subspace.  If a new generic rank-one constraint is
placed on a set \(T\subset[n]\), then the dimension drop is governed by the
contraction map
\[
C_T(\varphi_T):W\to\mathcal H_{\bar T},
\qquad
C_T(\varphi_T)(\psi)=(\varphi_T\otimes I_{\bar T})\psi .
\]
We denote by \(r_T(W)\) the generic rank of this map.  Thus, if
\(D=\dim W\), imposing the new constraint changes the dimension from \(D\) to
\[
D-r_T(W)
\]
almost surely.  The problem is therefore to show that a fresh random local
constraint typically has a sufficiently large contraction rank on the current
satisfying subspace.

The new deterministic input is the multiplicative slice-rank inequality.  For
every nonzero subspace \(W\subseteq(\mathbb C^2)^{\otimes n}\), if \(r_i(W)\)
denotes the generic contraction rank on the \(i\)-th qubit, then
\[
\prod_{i=1}^n r_i(W)\geq(\dim W)^{n-1}.
\]
The terminology reflects the tensorial nature of the estimate: applying a
covector to one tensor factor is a slice, or contraction, and the inequality
controls these slice ranks for an entire subspace.  It may be viewed as a
quantum-subspace analogue of the Loomis--Whitney projection
inequality~\cite{LoomisWhitney49}, with coordinate projection sizes replaced
by generic contraction ranks.

This inequality implies a local-rank estimate for uniformly random
\(k\)-sets:
\[
\mathbb E_T
\log_2\frac{\dim W}{r_T(W)}
\leq
\frac{k}{n}\log_2\dim W .
\]
Writing \(\dim W=2^{sn}\), this estimate says that a typical \(k\)-local
constraint sees a definite fraction of the current satisfying subspace.  More
precisely, using the arithmetic-geometric mean inequality and Jensen's
inequality, one obtains the logarithmic dimension-drift bound
\[
\mathbb E_T
\log_2\frac{\dim W}{\dim W-r_T(W)}
\geq
\log_2\frac{1}{1-2^{-ks}}.
\]
Thus, when the logarithmic dimension density is \(s\), the next random
constraint decreases it at a rate bounded from below by the expression on the
right-hand side.

The integral defining \(\alpha_\star(k)\) is the continuum form of this drift
calculation.  Starting from the full Hilbert space, where \(s=1\), one
integrates the reciprocal drift down to \(s=0\), obtaining
\[
\alpha_\star(k)
=
\int_0^1
\frac{ds}{
	\log_2\left(\frac{1}{1-2^{-ks}}\right)
} .
\]
After more than \(\alpha_\star(k)n\) constraints, the stopped martingale
argument shows that the satisfying subspace has become trivial with high
probability.

The same viewpoint may also be useful beyond random quantum satisfiability.
A satisfying subspace of a QSAT instance is a code-like subspace cut out by
local constraints, and quantum error-correcting codes are themselves subspaces
of many-qubit Hilbert spaces~\cite{KnillLaflamme97}.  In particular, sparse
local checks are central in the theory of quantum LDPC codes
\cite{TillichZemor14,PanteleevKalachev22}.  The dimension-drift method gives a
way to track how the dimension of such a subspace decays under random local
constraints, and therefore suggests possible applications to nontriviality or
rate estimates for random quantum code spaces.  By itself, the method does not
prove distance or decodability bounds; however, the local contraction-rank
estimates may be relevant to questions of local indistinguishability and
low-weight logical operators.  Conceptually, the method also probes the gap
between product and entangled solution spaces, which is one of the genuinely
quantum features of random QSAT absent from classical random \(k\)-SAT
\cite{LLMSS10,LMRV24}.

This viewpoint differs from the quantum Lov\'asz-local-lemma and quantum
Shearer literature, which gives SAT-side criteria ensuring that the common
kernel remains nonzero under suitable dependency assumptions
\cite{AKS,SMLM16,HLSZ19}.  Here the Shearer-type principle is used in the
opposite direction: it forces a fresh generic constraint to remove a
quantifiable amount of dimension from the current common kernel.  The method
also differs from upper-bound arguments based on finding special local
obstructions.  Instead, it tracks the rank removed by every new random
constraint through a global tensor-projection inequality, which is what leads
to the improvement from order \(2^k\) to order \(2^k/k\).

\section{Generic constraints and contraction ranks}
\label{sec:contraction-ranks}

We now make precise the dimension-drop statement used in the proof strategy.
Let \(W\subseteq\mathcal H_n\) be a nonzero subspace and let \(T\subseteq[n]\).
For a covector \(\varphi_T\in\mathcal H_T^*\), recall the contraction map
\[
C_T(\varphi_T):W\to\mathcal H_{\bar T},
\qquad
C_T(\varphi_T)(\psi)
=
(\varphi_T\otimes I_{\bar T})\psi .
\]
We define the generic contraction rank of \(W\) on \(T\) by
\[
r_T(W)
=
\max_{\varphi_T\in\mathcal H_T^*}
\rank C_T(\varphi_T).
\]
Equivalently, \(r_T(W)\) is the rank of \(C_T(\varphi_T)\) for generic
\(\varphi_T\in\mathcal H_T^*\).

Here and below, ``generic'' is used in the following elementary algebraic
sense.  A condition depending on a covector is generic if it holds outside the
common zero set of finitely many nonzero polynomial equations in the
coordinates of that covector.  Equivalently, it holds on a nonempty
Zariski-open set.  Such exceptional sets have Lebesgue measure zero, and finite
intersections of nonempty Zariski-open sets are again nonempty Zariski-open.

For a singleton \(T=\{i\}\), we write
\[
r_i(W)=r_{\{i\}}(W).
\]

The following lemma is the precise link between a generic local quantum
constraint and the contraction rank.

\begin{lemma}[Generic dimension drop]
	\label{lem:dimension-drop}
	Let \(W\subseteq\mathcal H_n\) be a nonzero subspace, and let \(T\subseteq[n]\).
	If a generic rank-one local constraint with support \(T\) is added, then the
	new subspace
	\[
	W'
	=
	W\cap\ker(\varphi_T\otimes I_{\bar T})
	\]
	satisfies
	\[
	\dim W'=\dim W-r_T(W)
	\]
	almost surely.
\end{lemma}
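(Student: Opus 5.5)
The plan is to identify \(W'\) with the kernel of the contraction map and then use the lower semicontinuity of rank in the Zariski topology. By definition,
\[
W'=W\cap\ker(\varphi_T\otimes I_{\bar T})=\ker\bigl(C_T(\varphi_T)\bigr),
\]
since \(C_T(\varphi_T)\) is just the restriction of \(\varphi_T\otimes I_{\bar T}\) to \(W\). Rank--nullity then gives
\[
\dim W'=\dim W-\rank C_T(\varphi_T)
\]
for every \(\varphi_T\). It therefore suffices to show that \(\rank C_T(\varphi_T)=r_T(W)\) for almost every \(\varphi_T\) with respect to any absolutely continuous distribution on \(\mathcal H_T^*\).

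For this, fix a basis \(\psi_1,\dots,\psi_D\) of \(W\) and a basis of \(\mathcal H_{\bar T}\). In coordinates, the matrix of \(C_T(\varphi_T)\) has entries that are linear forms in the coordinates of \(\varphi_T\). Let \(r=r_T(W)\) be the maximum rank, attained at some \(\varphi_T^0\). Then some \(r\times r\) minor \(P\) of this matrix is nonzero at \(\varphi_T^0\), so \(P\) is a nonzero polynomial in the coordinates of \(\varphi_T\).

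On the complement of the zero set of \(P\), the rank is at least \(r\). By maximality it is also at most \(r\), so it equals \(r\) there. Hence \(\{\varphi_T:\rank C_T(\varphi_T)=r_T(W)\}\) contains a nonempty Zariski-open set. This also justifies the phrase ``equivalently, for generic \(\varphi_T\)'' used in the definition of \(r_T(W)\).

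The only point needing care is the passage from ``Zariski-generic'' to ``almost surely'' for the sampling distribution. I would invoke the standard fact that the zero set of a nonzero polynomial on \(\mathbb C^{N}\cong\mathbb R^{2N}\) has Lebesgue measure zero. This can be proved by induction on the number of variables via Fubini, applied to the real and imaginary parts, where at least one of them is a nonzero real polynomial. Since the distribution of \(\varphi_T\) is absolutely continuous in any linear coordinates, the exceptional set \(\{P=0\}\) has probability zero. This applies conditionally on \(W\) and \(T\), because in the model \(\varphi_T\) is sampled independently of the previously imposed constraints that determine \(W\). Apart from this measure-zero remark, which I expect to be the only mildly technical step, the argument is routine linear algebra.
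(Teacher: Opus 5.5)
Your proposal is correct and follows the same route as the paper: rank--nullity gives \(\dim W'=\dim W-\rank C_T(\varphi_T)\), and a nonzero \(r_T(W)\times r_T(W)\) minor at a rank-maximizing covector shows that the rank-deficient set is a proper algebraic set, hence Lebesgue-null. Your two extra remarks are details the paper leaves implicit: the Fubini argument (via real and imaginary parts) that zero sets of nonzero polynomials are null, and the observation that \(\varphi_T\) is independent of the constraints determining \(W\).
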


\begin{proof}
	By definition of \(r_T(W)\), there exists a covector
	\(\eta_T\in\mathcal H_T^*\) such that
	\[
	\rank C_T(\eta_T)=r_T(W).
	\]
	Now regard \(\varphi_T\in\mathcal H_T^*\) as a variable covector.  After
	choosing bases, the condition
	\[
	\rank C_T(\varphi_T)<r_T(W)
	\]
	is equivalent to the vanishing of all \(r_T(W)\times r_T(W)\) minors of the
	matrix of \(C_T(\varphi_T)\).  These minors are polynomial functions of the
	coordinates of \(\varphi_T\).  Since the rank \(r_T(W)\) is attained at
	\(\eta_T\), at least one of these minors is not identically zero.  Therefore
	the exceptional set on which the rank is smaller than \(r_T(W)\) is a proper
	algebraic set, and hence has Lebesgue measure zero.
	
	Since the local covector is sampled from a distribution that is absolutely
	continuous with respect to Lebesgue measure, we have
	\[
	\rank C_T(\varphi_T)=r_T(W)
	\]
	almost surely.  For such a covector,
	\[
	\dim W'
	=
	\dim\ker C_T(\varphi_T)
	=
	\dim W-\rank C_T(\varphi_T)
	=
	\dim W-r_T(W).
	\]
	This proves the claim.
\end{proof}

\section{The multiplicative slice-rank inequality}
\label{sec:slice-rank}

Section~\ref{sec:contraction-ranks} reduced the effect of a generic local
constraint to the contraction rank \(r_T(W)\).  To use this reduction in the
random process, we need deterministic estimates showing that a subspace of
large global dimension must have large contraction ranks on typical local
sets of coordinates.  This section proves such estimates.  The main input is a
multiplicative slice-rank inequality for tensor-product subspaces, which is
then converted into a local-rank shadow bound for uniformly random \(k\)-sets.

\subsection{Coordinate contraction ranks and the multiplicative inequality}
\label{subsec:slice-rank-inequality}

\begin{lemma}[Multiplicative slice-rank inequality]
	\label{lem:multiplicative-slice-rank}
	Let
	\[
	W\subseteq(\mathbb C^2)^{\otimes m}
	\]
	be a nonzero subspace, and set
	\[
	D=\dim W.
	\]
	Then
	\[
	\prod_{i=1}^m r_i(W)\geq D^{m-1}.
	\]
\end{lemma}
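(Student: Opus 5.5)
The plan is to reduce the statement to the classical Loomis--Whitney inequality for finite sets $A\subseteq\{0,1\}^m$, via a Gröbner-type (torus) degeneration of $W$ to a coordinate subspace. The three steps are as follows.

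\emph{Step 1: monomial subspaces.} Let $W_A=\operatorname{span}\{e_x : x\in A\}$ for a set $A\subseteq\{0,1\}^m$ with $|A|=D$. For a covector $\varphi=(a,b)$ on qubit $i$, the contraction sends $e_x$ to $a\,e_{x_{-i}}$ if $x_i=0$ and to $b\,e_{x_{-i}}$ if $x_i=1$.
\begin{itemize}
\item If $ab\neq0$, every fiber of the coordinate projection $\pi_i:A\to\{0,1\}^{[m]\setminus\{i\}}$ contributes a nonzero multiple of the single vector $e_{y}$, where $y$ is the common image.
\item Distinct fibers give distinct, hence linearly independent, basis vectors.
\end{itemize}
Therefore $r_i(W_A)=|\pi_i(A)|$. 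The discrete Loomis--Whitney inequality, $\prod_i|\pi_i(A)|\ge |A|^{m-1}$, then gives the lemma for $W_A$. I would quote it, or prove it by the standard induction on $m$ using Hölder.

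\emph{Step 2: degenerate $W$ to a monomial subspace.} Fix generic integer weights $w_1,\dots,w_m$. Let the torus element $g_t=\bigotimes_i \operatorname{diag}(1,t^{w_i})$ act on $\mathcal H_m$. As $t\to0$ (or $t\to\infty$), the family $g_tW$ converges in the Grassmannian $\mathrm{Gr}(D,\mathcal H_m)$ to the initial subspace. With generic weights, all $2^m$ monomial weights $\sum_i w_ix_i$ are distinct, so this initial subspace is a monomial subspace $W_A$ with $|A|=D$. This can be seen by Gaussian elimination of a basis of $W$ into echelon form with respect to the weight order: $A$ is the set of leading monomials.

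\emph{Step 3: invariance and semicontinuity of $r_i$.} Two facts are needed.
\begin{itemize}
\item \emph{Invariance.} Since $g_t$ is a product of invertible one-qubit operators, $(\varphi\otimes I)g_t\psi=g_t'\bigl((\varphi\circ (g_t)_i)\otimes I\bigr)\psi$, where $g_t'$ is the invertible operator on the remaining qubits. Hence $r_i(g_tW)=r_i(W)$.
\item \emph{Semicontinuity.} The set $\{V\in\mathrm{Gr}(D,\mathcal H_m): r_i(V)\le r\}$ is closed, since it is cut out by the vanishing of all $(r+1)$-minors of the contraction matrix for every $\varphi$, expressed in Plücker or local coordinates. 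Therefore $r_i(W_A)\le \lim r_i(g_tW)=r_i(W)$.
\end{itemize}
Combining the three steps gives
\[
\prod_i r_i(W)\ \ge\ \prod_i r_i(W_A)\ =\ \prod_i|\pi_i(A)|\ \ge\ D^{m-1}.
\]

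I expect the main obstacle to be making Step 3 rigorous and clean, in particular the closedness of the rank condition on the Grassmannian together with the existence of the monomial limit. If the Grassmannian language proves awkward, I would use a more elementary route. Choose a basis of $W$ in reduced echelon form for the weight order, with leading monomials $A$. For a fixed $\varphi=(a,b)$ with $ab\ne0$, show directly that the contracted vectors have distinct leading terms on distinct fibers of $\pi_i$: take the element of largest weight in each fiber. This exhibits $|\pi_i(A)|$ linearly independent images, so $\rank C_i(\varphi)\ge|\pi_i(A)|$. This version requires care in choosing the order so that leading terms survive the contraction. A lexicographic-type weight with $w_i$ dominating keeps the comparison consistent after $x_i$ is deleted, but since this depends on $i$, the argument may need to be run separately for each $i$. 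Running it separately is fine provided the same set $A$ arises, or, alternatively, provided Loomis--Whitney is applied with sets $A^{(i)}$ that all have the same projections. I would first try the echelon-form route with a single generic weight, and check the leading-term compatibility.
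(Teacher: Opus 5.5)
Your proposal is correct, and it follows a genuinely different route from the paper.

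\textbf{The paper's route.} The paper works directly in linear algebra by induction on $m$. It splits off the last qubit and writes the elements of $W$ as $|0\rangle\otimes a+|1\rangle\otimes(Fa+b)$. Here $a$ ranges over the $r$-dimensional image of a maximal-rank slice, with $r=r_m(W)$, and $b$ ranges over a $(D-r)$-dimensional subspace. It shows that for $i<m$ the rank $r_i(W)$ is at least the sum of the corresponding ranks of these two subspaces. It then applies the induction hypothesis to both, combines them via superadditivity of the geometric mean, and finishes with a scalar inequality that uses $r\ge D/2$. In effect, it re-runs the inductive proof of Loomis--Whitney for subspaces.

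\textbf{Your route.} You instead transfer the statement to the combinatorial Loomis--Whitney inequality by a torus degeneration. Your Step 3 goes in the right direction, since rank can only drop in a limit, and that is exactly the inequality $r_i(W_A)\le r_i(W)$ you need. It can also be made fully elementary, without Grassmannians:
\begin{itemize}
\item Take an echelon basis $v_j$ of $W$ with distinct minimal-weight pivots $x^{(j)}$. Then $t^{-w\cdot x^{(j)}}g_tv_j\to c_je_{x^{(j)}}$ with $c_j\neq0$ as $t\to0$.
\item Fix a covector with both coordinates nonzero. For every $t\neq0$, the contracted vectors span a space of dimension at most $r_i(W)$, by your invariance observation.
\item Closedness of $\{\rank\le r\}$ then gives $|\pi_i(A)|\le r_i(W)$.
\end{itemize}
So the leading-term variant you sketch is unnecessary, and it is more fragile, because terms of a basis vector can reorder or cancel once $x_i$ is deleted.

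\textbf{What each buys.} The paper's argument is self-contained and uses no limits or external theorem, but it is written specifically for qubits (two slices, $r\ge D/2$). Yours makes the Loomis--Whitney analogy literal and generalizes for free to qudits and to contractions on larger sets. Indeed, a generic covector on $S$ gives $r_S(W_A)=|\pi_{[m]\setminus S}(A)|$, with the same invariance and semicontinuity. In particular, apply Shearer's lemma to the complements of the $k$-subsets of $[n]$, which cover each coordinate $\binom{n-1}{k}$ times. This gives $\prod_T r_T(W)\ge D^{\binom{n-1}{k}}$, which is exactly Corollary~\ref{cor:k-site-shadow}, without the sequential one-site contraction argument.
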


\begin{proof}
	We prove the result by induction on \(m\).
	
	For \(m=1\), the claim is
	\[
	r_1(W)\geq D^0=1,
	\]
	which holds because \(W\neq0\), and a generic covector on \(\mathbb C^2\)
	does not annihilate all of \(W\).
	
	Assume the claim known for \((m-1)\)-fold tensor products.  After relabelling
	the coordinates, split off the \(m\)-th tensor factor and write
	\[
	(\mathbb C^2)^{\otimes m}
	=
	\mathbb C^2\otimes\mathcal K,
	\qquad
	\mathcal K=(\mathbb C^2)^{\otimes(m-1)}.
	\]
	Let
	\[
	r=r_m(W).
	\]
	Choose a covector on the first factor whose contraction rank on \(W\) is \(r\).
	After a change of basis in \(\mathbb C^2\), we may assume that this covector
	is \(\langle0|\).  Let
	\[
	\pi_0:W\to\mathcal K
	\]
	be the corresponding \(|0\rangle\)-slice map.  Then
	\[
	\rank \pi_0=r.
	\]
	Set
	\[
	A=\pi_0(W)\subseteq\mathcal K,
	\qquad
	\dim A=r.
	\]
	The kernel of \(\pi_0|_W\) consists of vectors of the form
	\[
	|1\rangle\otimes b,
	\]
	where \(b\) belongs to some subspace
	\[
	B\subseteq\mathcal K.
	\]
	Hence
	\[
	\dim B=D-r.
	\]
	
	The two slice maps of \(W\) onto the \(|0\rangle\)- and
	\(|1\rangle\)-components both have rank at most \(r\), since \(r\) is the
	maximal contraction rank along the \(m\)-th coordinate.  Together, these two
	slices embed \(W\) into \(\mathcal K\oplus\mathcal K\).  Therefore
	\[
	D\leq 2r,
	\]
	and hence
	\[
	r\geq \frac D2.
	\]
	
	Choose a linear section of \(\pi_0:W\to A\).  Then there is a linear map
	\[
	F:A\to\mathcal K
	\]
	such that every element of \(W\) can be written uniquely in the form
	\[
	|0\rangle\otimes a
	+
	|1\rangle\otimes(Fa+b),
	\qquad
	a\in A,\ b\in B.
	\]
	
	Now fix a coordinate \(i<m\), viewed as a coordinate of \(\mathcal K\).  Let
	\[
	r_i(A)
	\qquad\text{and}\qquad
	r_i(B)
	\]
	be the corresponding generic contraction ranks of \(A\) and \(B\).  Choose a
	covector on the \(i\)-th coordinate of \(\mathcal K\) which is generic
	simultaneously for \(A\) and \(B\).  This is possible because the intersection
	of finitely many nonempty Zariski-open sets is again nonempty and Zariski-open.
	
	Let
	\[
	C_i:\mathcal K\to(\mathbb C^2)^{\otimes([m-1]\setminus\{i\})}
	\]
	be the associated contraction.  The contracted image of \(W\) along coordinate
	\(i\) is
	\[
	\left\{
	|0\rangle\otimes C_i a
	+
	|1\rangle\otimes(C_iFa+C_i b)
	:
	a\in A,\ b\in B
	\right\}.
	\]
	Projection onto the \(|0\rangle\)-component has image \(C_i(A)\), of dimension
	\(r_i(A)\).  Its kernel contains the subspace
	\[
	\left\{
	|1\rangle\otimes C_i b:b\in B
	\right\},
	\]
	which has dimension \(r_i(B)\).  Hence this contracted image has dimension at
	least
	\[
	r_i(A)+r_i(B).
	\]
	Since \(r_i(W)\) is the maximal contraction rank along coordinate \(i\), we
	obtain
	\[
	r_i(W)\geq r_i(A)+r_i(B)
	\qquad\text{for every } i<m.
	\]
	
	By the induction hypothesis applied to \(A\subseteq\mathcal K\),
	\[
	\prod_{i=1}^{m-1} r_i(A)\geq r^{m-2}.
	\]
	If \(B=0\), then \(D=r\), and therefore
	\[
	\prod_{i=1}^m r_i(W)
	\geq
	r\prod_{i=1}^{m-1}r_i(A)
	\geq
	D\cdot D^{m-2}
	=
	D^{m-1}.
	\]
	Thus we may assume \(B\neq0\).  Applying the induction hypothesis to
	\(B\subseteq\mathcal K\) gives
	\[
	\prod_{i=1}^{m-1} r_i(B)\geq (D-r)^{m-2}.
	\]
	
	Using \(r_i(W)\geq r_i(A)+r_i(B)\), we get
	\[
	\prod_{i=1}^{m-1} r_i(W)
	\geq
	\prod_{i=1}^{m-1}\bigl(r_i(A)+r_i(B)\bigr).
	\]
	We shall use the multiplicative Minkowski inequality
	\[
	\left(
	\prod_{i=1}^{m-1}(x_i+y_i)
	\right)^{1/(m-1)}
	\geq
	\left(
	\prod_{i=1}^{m-1}x_i
	\right)^{1/(m-1)}
	+
	\left(
	\prod_{i=1}^{m-1}y_i
	\right)^{1/(m-1)}
	\]
	for nonnegative \(x_i,y_i\).  This follows from the concavity and homogeneity
	of the geometric mean.  Therefore
	\[
	\prod_{i=1}^{m-1} r_i(W)
	\geq
	\left(
	r^{(m-2)/(m-1)}
	+
	(D-r)^{(m-2)/(m-1)}
	\right)^{m-1}.
	\]
	Multiplying by \(r_m(W)=r\), we obtain
	\[
	\prod_{i=1}^m r_i(W)
	\geq
	r
	\left(
	r^{(m-2)/(m-1)}
	+
	(D-r)^{(m-2)/(m-1)}
	\right)^{m-1}.
	\]
	
	It remains to check that the right-hand side is at least \(D^{m-1}\).  Set
	\[
	x=\frac rD.
	\]
	Since \(r\geq D/2\), we have \(x\in[1/2,1]\).  Dividing by \(D^{m-1}\), it
	suffices to prove
	\[
	x
	\left(
	x^{(m-2)/(m-1)}
	+
	(1-x)^{(m-2)/(m-1)}
	\right)^{m-1}
	\geq 1.
	\]
	Let
	\[
	a=x^{1/(m-1)},
	\qquad
	b=(1-x)^{1/(m-1)}.
	\]
	Then \(a\geq b\) and
	\[
	a^{m-1}+b^{m-1}=1.
	\]
	The left-hand side above, raised to the power \(1/(m-1)\), is
	\[
	a(a^{m-2}+b^{m-2})
	=
	a^{m-1}+ab^{m-2}.
	\]
	Since \(a\geq b\),
	\[
	ab^{m-2}\geq b^{m-1}.
	\]
	Thus
	\[
	a^{m-1}+ab^{m-2}
	\geq
	a^{m-1}+b^{m-1}
	=
	1.
	\]
	This proves the scalar inequality and completes the induction.
\end{proof}

\subsection{Local-rank shadow inequalities}
\label{sec:shadow}

The multiplicative inequality controls the product of the one-coordinate
contraction ranks.  We now convert it into an averaged estimate, first for a
uniformly random coordinate and then for a uniformly random \(k\)-set.  The
second form is the one used in the dimension-drift argument of
Section~\ref{sec:dimension-drift-proof}.

\begin{corollary}[Single-site rank shadow]
	\label{cor:single-site-shadow}
	Let
	\[
	W\subseteq(\mathbb C^2)^{\otimes n}
	\]
	be nonzero with \(\dim W=D\).  Then
	\[
	\frac1n\sum_{i=1}^n
	\log_2\frac{D}{r_i(W)}
	\leq
	\frac1n\log_2D.
	\]
	Equivalently, if \(I\) is uniformly random in \([n]\), then
	\[
	\mathbb E_I
	\log_2\frac{D}{r_I(W)}
	\leq
	\frac1n\log_2D.
	\]
\end{corollary}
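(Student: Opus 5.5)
This corollary follows by taking logarithms in Lemma~\ref{lem:multiplicative-slice-rank}, applied with \(m=n\). The lemma gives
\[
\prod_{i=1}^n r_i(W)\geq D^{n-1}.
\]
All factors are positive integers, since \(W\neq 0\) forces \(r_i(W)\geq 1\), as in the base case of the lemma. We may therefore take \(\log_2\) of both sides, obtaining
\[
\sum_{i=1}^n \log_2 r_i(W)\geq (n-1)\log_2 D .
\]

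Next, we write
\[
\sum_{i=1}^n\log_2\frac{D}{r_i(W)}=n\log_2 D-\sum_{i=1}^n\log_2 r_i(W).
\]
Substituting the previous bound shows that this sum is at most
\[
n\log_2 D-(n-1)\log_2 D=\log_2 D .
\]
Dividing by \(n\) yields the first displayed inequality of the corollary.

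The second formulation is the same statement read probabilistically. When \(I\) is uniform on \([n]\), the expectation \(\mathbb E_I\log_2(D/r_I(W))\) is exactly the average \(\frac1n\sum_i\log_2(D/r_i(W))\).

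There is no genuine obstacle in this argument. The only point to check is that each \(r_i(W)\) is strictly positive, so that the logarithms are defined; this holds because \(W\) is nonzero. Each term \(\log_2(D/r_i(W))\) is also nonnegative, since \(r_i(W)\leq D\), though the argument does not need this.
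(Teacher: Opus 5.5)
Your proposal is correct and is the same argument as the paper: take $\log_2$ in the multiplicative slice-rank inequality with $m=n$, rewrite $\sum_i\log_2(D/r_i(W))=n\log_2 D-\sum_i\log_2 r_i(W)\le\log_2 D$, and divide by $n$. Your added check that each $r_i(W)\geq 1$, so that the logarithms are defined, is a small point the paper leaves implicit.
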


\begin{proof}
	Taking logarithms in Lemma~\ref{lem:multiplicative-slice-rank} gives
	\[
	\sum_{i=1}^n\log_2 r_i(W)
	\geq
	(n-1)\log_2D.
	\]
	Equivalently,
	\[
	\sum_{i=1}^n
	\log_2\frac{D}{r_i(W)}
	\leq
	\log_2D.
	\]
	Dividing by \(n\) proves the claim.
\end{proof}

\begin{corollary}[Local-rank shadow for \(k\) coordinates]
	\label{cor:k-site-shadow}
	Let
	\[
	W\subseteq(\mathbb C^2)^{\otimes n}
	\]
	be nonzero with \(\dim W=D\).  Let \(T\subseteq[n]\) be uniformly random among
	all \(k\)-subsets.  Then
	\[
	\mathbb E_T
	\log_2\frac{D}{r_T(W)}
	\leq
	\frac{k}{n}\log_2D.
	\]
\end{corollary}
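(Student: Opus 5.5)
The plan is to reduce the $k$-set statement to Corollary~\ref{cor:single-site-shadow}, by regarding the $k$ coordinates of $T$ as a single tensor factor. The cleanest route is an analogue of Lemma~\ref{lem:multiplicative-slice-rank} in which the coordinates are grouped into blocks. Since $r_T(W)$ depends only on $W$ and the set $T$, we need a lower bound on $\sum_T \log_2 r_T(W)$ over all $k$-subsets $T$.

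\textbf{Step 1 (generalized multiplicative inequality).} I will first show the following. For any decomposition $\mathcal H_n=\bigotimes_{j=1}^{p}\mathcal H_{T_j}$ into blocks $T_1,\dots,T_p$, with $p\geq 1$ and each block of arbitrary (possibly different) local dimension,
\[
\prod_{j=1}^p r_{T_j}(W)\geq D^{p-1}.
\]
The proof of Lemma~\ref{lem:multiplicative-slice-rank} never used that the factor split off is $\mathbb C^2$ except in two places. The first is the decomposition into a $|0\rangle$-slice and a kernel; the second is the bound $D\leq 2r$. For a factor $\mathbb C^{d}$ one would instead take the generic covector $\langle 0|$ and a complementary basis, and the bound becomes $D\leq d\,r$, which breaks the final scalar inequality. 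So I would \emph{not} redo the induction. Instead I would reduce to the qubit case as follows.

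\textbf{Step 1$'$ (reduction to single sites).} Fix $T$. I claim $r_T(W)\geq \max_{i\in T} r_i(W)$, or something slightly stronger. A generic $\varphi_T$ can be degenerated to a product covector $\varphi_i\otimes\psi_{T\setminus\{i\}}$, and rank is lower semicontinuous. So $r_T(W)\geq \rank C_{T\setminus\{i\}}(\psi)\circ C_i(\varphi_i)$. Choosing $\psi$ generic for the subspace $C_i(\varphi_i)W$, this rank equals $r_{T\setminus\{i\}}(C_i(\varphi_i)W)$. Iterating gives
\[
r_T(W)\geq r_{i_k}(W_{k-1})\quad\text{with}\quad W_0=W,\ W_{j}=C_{i_j}(\varphi_{i_j})W_{j-1},
\]
which is a chain, not a product. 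This suggests the correct target: apply Corollary~\ref{cor:single-site-shadow} successively. Write $\log_2\frac{D}{r_T(W)}\leq\sum_{j=1}^k\log_2\frac{\dim W_{j-1}}{\dim W_j}$, using $\dim W_k\leq r_T(W)$. Order $T$ uniformly at random, so that $i_1,i_2,\dots$ is a uniformly random sequence of distinct coordinates.

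\textbf{Step 2 (averaging along the chain).} Conditional on $i_1,\dots,i_{j-1}$, the subspace $W_{j-1}$ lives in $(\mathbb C^2)^{\otimes(n-j+1)}$, and $i_j$ is uniform among the remaining coordinates. Corollary~\ref{cor:single-site-shadow} then gives
\[
\mathbb E\Bigl[\log_2\tfrac{\dim W_{j-1}}{\dim W_j}\Bigm| i_1,\dots,i_{j-1}\Bigr]\leq\frac{\log_2\dim W_{j-1}}{n-j+1}\leq\frac{\log_2 D}{n-j+1}.
\]
Summing gives the bound $\log_2 D\sum_{j=1}^k\frac1{n-j+1}$. This is slightly worse than $\frac kn\log_2 D$, so I must sharpen it. I will use the recursion $L_j=\log_2\dim W_j$, with $\mathbb E L_j\geq \mathbb E L_{j-1}\bigl(1-\frac1{n-j+1}\bigr)$, which is linear, so expectations pass through. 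The product telescopes:
\[
\mathbb E L_k\geq L_0\prod_{j=1}^k\frac{n-j}{n-j+1}=\frac{n-k}{n}\log_2 D.
\]
Hence $\mathbb E_T\log_2 r_T(W)\geq\mathbb E L_k\geq(1-\frac kn)\log_2D$, which is exactly the claim.

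\textbf{Main obstacle.} The delicate point is Step 1$'$. I need that $\dim W_k\leq r_T(W)$, that is, that the sequential product contraction has rank at most the generic one. This holds because $r_T$ is defined as a maximum over all $\varphi_T$, including product covectors. I also need that, at each stage, the chosen $\varphi_{i_j}$ is generic for $W_{j-1}$, so that $\dim W_j=r_{i_j}(W_{j-1})$. Both are immediate from the definitions, so the argument is in fact elementary once the telescoping is set up.
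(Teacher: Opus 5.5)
Your final argument is correct and is essentially the paper's proof: successive generic one-site contractions along a uniformly random ordering of $T$, then Corollary~\ref{cor:single-site-shadow} applied conditionally at each step, then the linear recursion telescoping to $\mathbb E\log_2 D_k\geq\frac{n-k}{n}\log_2 D$, and finally $r_T(W)\geq D_k$ because the maximum over $\mathcal H_T^*$ dominates every product covector. The abandoned Step~1 and the opening claim $r_T(W)\geq\max_{i\in T}r_i(W)$ are not used; the latter is false for $k\geq 2$ (for $W=\mathcal H_n$ it would say $2^{n-k}\geq 2^{n-1}$), so both should simply be deleted.
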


\begin{proof}
	Choose an ordered \(k\)-tuple of distinct coordinates uniformly at random.
	Starting from
	\[
	W_0=W,
	\]
	perform generic one-site contractions successively along these coordinates.
	Let
	\[
	D_j=\dim W_j
	\]
	be the image dimension after \(j\) contractions.  Thus \(D_0=D\).
	
	Conditioned on \(W_j\), the next coordinate is uniformly distributed among
	the remaining \(n-j\) tensor factors.  Applying
	Corollary~\ref{cor:single-site-shadow} to \(W_j\) on these remaining factors
	gives
	\[
	\mathbb E
	\left[
	\log_2\frac{D_j}{D_{j+1}}
	\,\middle|\, W_j
	\right]
	\leq
	\frac1{n-j}\log_2D_j.
	\]
	Equivalently,
	\[
	\mathbb E[\log_2D_{j+1}\mid W_j]
	\geq
	\left(1-\frac1{n-j}\right)\log_2D_j.
	\]
	Iterating this estimate for \(j=0,\ldots,k-1\) yields
	\[
	\mathbb E\log_2D_k
	\geq
	\frac{n-k}{n}\log_2D.
	\]
	Consequently,
	\[
	\mathbb E
	\log_2\frac{D}{D_k}
	\leq
	\frac{k}{n}\log_2D.
	\]
	
	For a fixed unordered set \(T\), the generic rank \(r_T(W)\) over
	\(\mathcal H_T^*\) is the maximum possible rank of \(C_T(\varphi_T)\).  Hence
	it is at least the rank obtained from any product covector on the coordinates
	of \(T\).  Applying this to the generic product covectors used above gives
	\[
	r_T(W)\geq D_k.
	\]
	Therefore
	\[
	\log_2\frac{D}{r_T(W)}
	\leq
	\log_2\frac{D}{D_k}.
	\]
	Taking expectations over \(T\) gives
	\[
	\mathbb E_T
	\log_2\frac{D}{r_T(W)}
	\leq
	\frac{k}{n}\log_2D,
	\]
	as claimed.
\end{proof}

\section{Dimension drift and proof of the main theorem}\label{sec:dimension-drift-proof}

Section~\ref{sec:slice-rank} gave a deterministic estimate on the contraction
rank seen by a uniformly random \(k\)-set.  We now turn this static estimate
into a dynamical statement about the random quantum \(k\)-SAT process.  At a
given time, the current satisfying subspace has some logarithmic dimension
density \(s\).  The local-rank shadow bound implies that the next random
generic constraint produces a definite expected logarithmic drop, depending
only on \(s\).  Integrating this drift, with a truncation to obtain bounded
increments, proves that after more than \(\alpha_\star(k)n\) constraints the
satisfying subspace is trivial with high probability.

\subsection{Expected logarithmic dimension drift}\label{sec:drift}
	
	Let \(W\subseteq\mathcal H_n\) be nonzero with
	\[
	D=\dim W=2^{sn}.
	\]
	For a uniformly random \(k\)-subset \(T\subseteq[n]\), define
	\[
	q_T(W)=\frac{r_T(W)}{D}.
	\]
	The local-rank shadow inequality gives
	\[
	\mathbb E_T\log_2\frac1{q_T(W)}
	\le
	ks.
	\]
	By the genericity convention above, a generic quantum \(k\)-SAT
	constraint on \(T\) changes the dimension from
	\[
	D
	\]
	to
	\[
	D-r_T(W)=D(1-q_T(W))
	\]
	almost surely. Thus, with the convention that the logarithmic drop is
	\(+\infty\) if \(r_T(W)=D\), the logarithmic dimension drop is
	\[
	X_T(W)
	=
	\log_2\frac{D}{D-r_T(W)}
	=
	\log_2\frac1{1-q_T(W)}.
	\]
	
	Set
	\[
	h_T(W)=\log_2\frac1{q_T(W)}.
	\]
	Then
	\[
	X_T(W)=g(h_T(W)),
	\]
	where
	\[
	g(h)=\log_2\frac1{1-2^{-h}},\qquad h>0,
	\]
	and \(g(0)=+\infty\) by convention. The function \(g\) is decreasing and
	convex on \((0,\infty)\). Therefore,
	by Jensen's inequality,
	\[
	\mathbb E_T X_T(W)
	=
	\mathbb E_T g(h_T(W))
	\ge
	g\bigl(\mathbb E_T h_T(W)\bigr).
	\]
	Since \(g\) is decreasing and
	\[
	\mathbb E_T h_T(W)\le ks,
	\]
	we obtain
	\[
	\mathbb E_T X_T(W)
	\ge
	g(ks)
	=
	\log_2\frac1{1-2^{-ks}}.
	\]
	
\subsection{Stopped martingale integration}\label{sec:martingale}
	
	For \(C>0\), define
	\[
	g_C(h)=\min\{g(h),C\},
	\qquad h\in(0,k],
	\]
	and set \(g_C(0)=C\). Thus \(g_C\) is finite and positive on \([0,k]\).
	Let \(\phi_C\) be the largest convex nonincreasing function on \([0,k]\)
	such that
	\[
	\phi_C(h)\le g_C(h)
	\qquad\text{for all }h\in[0,k].
	\]
	Define
	\[
	\mu_{k,C}(s)=\phi_C(ks),
	\qquad s\in[0,1],
	\]
	and
	\[
	\alpha_{k,C}
	=
	\int_0^1\frac{ds}{\mu_{k,C}(s)}.
	\]
	
	\begin{lemma}[Truncated drift]
		Let \(W\subseteq\mathcal H_n\) be nonzero with
		\[
		\dim W=2^{sn}.
		\]
		Then
		\[
		\mathbb E_T
		\min\left\{
		\log_2\frac{\dim W}{\dim W-r_T(W)},C
		\right\}
		\ge
		\mu_{k,C}(s),
		\]
		where the expression inside the minimum is interpreted as \(+\infty\) if
		\(r_T(W)=\dim W\).
	\end{lemma}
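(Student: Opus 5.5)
We mirror the untruncated drift computation of Section~\ref{sec:drift}, replacing \(g\) by its truncation \(g_C\) and then by the convex minorant \(\phi_C\), so that Jensen's inequality still applies.

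\textbf{Rewriting the left-hand side.} Write \(D=\dim W=2^{sn}\) and \(q_T=r_T(W)/D\). Set \(h_T=\log_2(1/q_T)\), so that \(h_T\ge 0\). Since \(r_T(W)\ge 1\) (any nonzero subspace has a nonzero contraction by a generic covector) and \(D\le 2^n\), we have \(h_T\le \log_2 D\le n\). This naive bound is not good enough, so we do the following.

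\textbf{The needed range bound.} We need \(h_T\in[0,k]\), because \(\phi_C\) is only defined on \([0,k]\). For this we use the fact that the contraction along \(T\) embeds \(W\) into \(2^k\) copies of \(\mathcal H_{\bar T}\) via the basis slices. Each slice has rank at most \(r_T(W)\), so \(D\le 2^k r_T(W)\), i.e.\ \(h_T\le k\). This is the same argument as the bound \(D\le 2r\) in the proof of Lemma~\ref{lem:multiplicative-slice-rank}. It uses only that each basis-covector slice has rank at most the generic rank.

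With this in hand, the integrand on the left-hand side equals \(g_C(h_T)\) pointwise. When \(h_T=0\), both sides equal \(C\), by the \(+\infty\) convention and the definition \(g_C(0)=C\).

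\textbf{Applying Jensen.} Since \(\phi_C\le g_C\) on \([0,k]\), we get
\[
\mathbb E_T g_C(h_T)\ge \mathbb E_T\phi_C(h_T)\ge \phi_C(\mathbb E_T h_T),
\]
where the second inequality is Jensen's inequality for the convex function \(\phi_C\) on the interval \([0,k]\), which contains the mean of \(h_T\).

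\textbf{Concluding with monotonicity.} Corollary~\ref{cor:k-site-shadow} gives \(\mathbb E_T h_T\le \frac kn\log_2 D=ks\). Because \(\phi_C\) is nonincreasing, \(\phi_C(\mathbb E_T h_T)\ge \phi_C(ks)=\mu_{k,C}(s)\). Note that \(ks\le k\) since \(s\le 1\), so \(\phi_C(ks)\) is defined.

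\textbf{Well-definedness of \(\phi_C\).} The only remaining point is that \(\phi_C\) exists. The constant function \(0\) is a convex nonincreasing minorant of \(g_C\), and the pointwise supremum of convex nonincreasing functions lying below \(g_C\) is again convex, nonincreasing and below \(g_C\). So the largest such function is well defined and finite. The main obstacle is the range bound \(h_T\le k\); everything else is a direct transcription of Section~\ref{sec:drift}.
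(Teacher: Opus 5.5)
Your proof is correct and follows the paper's argument essentially step for step. The steps are: the basis-slice injection giving $\dim W\le 2^k r_T(W)$, hence $h_T\in[0,k]$; the pointwise identification of the truncated drop with $g_C(h_T)$, including the case $h_T=0$; the bound $\phi_C\le g_C$; Jensen for the convex $\phi_C$; and monotonicity of $\phi_C$ combined with $\mathbb E_T h_T\le ks$ from Corollary~\ref{cor:k-site-shadow}. Your extra remarks, that $\phi_C$ exists as a pointwise supremum of admissible minorants and that $ks\in[0,k]$, are correct details the paper leaves implicit.
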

	
	\begin{proof}
		Fix \(T\). We first record the elementary lower bound
		\[
		r_T(W)\ge 2^{-k}\dim W.
		\]
		Choose a product basis \((e_a)_{a=1}^{2^k}\) of \(\mathcal H_T\), with dual
		basis \((e_a^*)_{a=1}^{2^k}\). The slice maps
		\[
		(e_a^*\otimes I_{\bar T})|_W:W\to\mathcal H_{\bar T}
		\]
		jointly inject \(W\) into \(\mathcal H_{\bar T}^{\oplus 2^k}\). Each of these
		maps has rank at most \(r_T(W)\), because \(r_T(W)\) is the maximal
		contraction rank over all covectors in \(\mathcal H_T^*\). Therefore
		\[
		\dim W\le \sum_{a=1}^{2^k}\rank (e_a^*\otimes I_{\bar T})|_W
		\le 2^k r_T(W),
		\]
		which proves the claimed lower bound.
		
		Consequently
		\[
		q_T(W)=\frac{r_T(W)}{\dim W}\ge 2^{-k},
		\]
		and hence
		\[
		h_T(W)=\log_2\frac1{q_T(W)}\in[0,k].
		\]
		The truncated logarithmic drop is
		\[
		g_C(h_T(W)).
		\]
		Since
		\[
		\phi_C\le g_C,
		\]
		we have
		\[
		\mathbb E_T g_C(h_T(W))
		\ge
		\mathbb E_T\phi_C(h_T(W)).
		\]
		The function \(\phi_C\) is convex, so Jensen's inequality gives
		\[
		\mathbb E_T\phi_C(h_T(W))
		\ge
		\phi_C\bigl(\mathbb E_T h_T(W)\bigr).
		\]
		Since \(\phi_C\) is nonincreasing and
		\[
		\mathbb E_T h_T(W)\le ks,
		\]
		we get
		\[
		\phi_C\bigl(\mathbb E_T h_T(W)\bigr)
		\ge
		\phi_C(ks)
		=
		\mu_{k,C}(s).
		\]
	\end{proof}
	
	\begin{proposition}[High-probability hitting bound]
		Fix \(k\ge3\), \(C>0\), and \(\varepsilon>0\). In the model where
		\(k\)-subsets are sampled independently and uniformly from
		\(\binom{[n]}k\), a random generic quantum \(k\)-SAT instance with
		\[
		m\ge(\alpha_{k,C}+\varepsilon)n
		\]
		constraints is unsatisfiable with high probability.
	\end{proposition}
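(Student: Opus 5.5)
Let \(W_t\) be the satisfying subspace after the first \(t\) constraints, with \(W_0=\mathcal H_n\). Write \(D_t=\dim W_t\) and \(s_t=\frac1n\log_2 D_t\in[0,1]\) while \(W_t\neq0\). We stop at \(\tau=\min\{t: W_t=0\}\). The plan is to track the potential
\[
\Psi(s)=\int_0^{s}\frac{du}{\mu_{k,C}(u)},
\]
so that \(\Psi(1)=\alpha_{k,C}\), and show that \(n\Psi(s_t)\) drops by at least one per step in expectation. First note that \(\mu_{k,C}\) is positive, nonincreasing in \(s\), and bounded between \(\phi_C(k)>0\) and \(C\). Hence \(\Psi\) is increasing and concave, with \(\Psi'=1/\mu_{k,C}\) nondecreasing.

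\textbf{Step 1: drift of the potential.} Let \(Y_t=\min\{\log_2(D_{t-1}/D_t),C\}\), with the natural convention \(Y_t=C\) if \(W_t=0\). Conditionally on \(\mathcal F_{t-1}\) and \(W_{t-1}\ne0\), the new support \(T_t\) is uniform and independent. By Lemma~\ref{lem:dimension-drop}, almost surely \(D_t=D_{t-1}-r_{T_t}(W_{t-1})\). The truncated drift lemma then gives
\[
\mathbb E[Y_t\mid\mathcal F_{t-1}]\ge\mu_{k,C}(s_{t-1}).
\]
Define \(Z_t=n\Psi(s_t)\) for \(t<\tau\), and \(Z_\tau=0\). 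On steps with \(W_t\neq0\), \(s_t\le s_{t-1}-Y_t/n\). Since \(\Psi\) is increasing and \(\Psi'\ge 1/\mu_{k,C}(s_{t-1})\) on \([s_t,s_{t-1}]\), we get
\[
Z_{t-1}-Z_t\ge Y_t/\mu_{k,C}(s_{t-1}).
\]
When \(W_t=0\), the drop is \(Z_{t-1}\ge\) something nonnegative, and we only need a lower bound. Since \(Y_t\le C\), the per-step decrement satisfies \(|Z_{t-1}-Z_t|\le C/\mu_{\min}\) on the non-terminal steps. At the terminal step we instead use the bound \(Z_{t-1}\le n\Psi(1)\) directly, or cap the decrement.

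Taking conditional expectations, \(\mathbb E[Z_{t-1}-Z_t\mid\mathcal F_{t-1}]\ge 1\) on \(\{t\le\tau\}\). The terminal step needs care: there \(Y_t=C\) and the actual drop is \(n\Psi(s_{t-1})\). This is at least \(C/\mu(s_{t-1})\) only if \(n s_{t-1}\ge C\). The alternative is \(D_{t-1}\le 2^C\), a constant dimension. I will handle this by stopping instead at the first time \(\sigma\) with \(D_t\le 2^{C}\), i.e. \(s_t\le C/n\). At that point \(Z\) is \(O(1)\), so nothing is lost.

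\textbf{Step 2: concentration.} Set \(M_t=Z_{t\wedge\sigma}+(t\wedge\sigma)\). This is a supermartingale with increments bounded by a constant \(c=c(k,C)\). By Azuma--Hoeffding, for \(m=(\alpha_{k,C}+\varepsilon)n\),
\[
\mathbb P\bigl(M_m\ge M_0+\tfrac{\varepsilon}{2}n\bigr)\le \exp(-\varepsilon^2 n/(8c^2m))\to0.
\]
Since \(M_0=n\alpha_{k,C}\), on the complementary event we have \(m\wedge\sigma\le Z_{m\wedge\sigma}+(m\wedge\sigma)=M_m\le(\alpha_{k,C}+\varepsilon/2)n\). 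This is less than \(m\), so \(\sigma<m\). Hence with high probability the dimension falls to at most \(2^C\) within \((\alpha_{k,C}+\varepsilon/2)n\) steps.

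\textbf{Step 3: killing the last constant dimensions.} There remain \(\varepsilon n/2\) fresh constraints. Each nonzero \(W\) has \(r_T(W)\ge1\) for every \(T\): some slice in a product basis is nonzero, and the generic rank is at least that rank. So every constraint lowers the dimension by at least one while \(W\ne0\). After at most \(2^C\) further constraints, which is fewer than \(\varepsilon n/2\) for large \(n\), the subspace is \(\{0\}\).

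I expect the main obstacle to be Step 1's bookkeeping near the end. This covers the terminal step, the \(+\infty\) convention, and checking that the decrement is bounded and that the \(\Psi\)-comparison via concavity is legitimate when \(s_t\) is near \(0\). The stopping at \(\sigma\) together with the deterministic finish of Step 3 is what I would rely on to sidestep it.
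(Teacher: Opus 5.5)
Your potential-function route differs from the paper's. The paper partitions \([0,1]\) into finitely many stages \(1=s_0>\dots>s_M=0\), uses the single lower bound \(m_j=\mu_{k,C}(s_{j-1})\) throughout stage \(j\), applies Azuma--Hoeffding to the truncated drops within each stage, and pays through an upper Riemann sum. Tracking \(n\Psi(s_t)\) is a workable idea, but two of your core steps fail as written.

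\textbf{First, the convexity is reversed.} Since \(\mu_{k,C}\) is nonincreasing, \(\Psi'=1/\mu_{k,C}\) is nondecreasing. So \(\Psi\) is convex, and on \([s_t,s_{t-1}]\) one has \(\Psi'\le 1/\mu_{k,C}(s_{t-1})\), not \(\ge\). Hence \(Z_{t-1}-Z_t\ge Y_t/\mu_{k,C}(s_{t-1})\) is unjustified, and \(M_t\) is not shown to be a supermartingale. The drift bound is evaluated where the drift is smallest, while the potential charges at most \(1/\mu_{k,C}(s_{t-1})\) per unit of descent.

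The repair is to use only the first \(Y_t/n\) of the descent. For \(s_{t-1}>C/n\), including the terminal case where \(Z_t=n\Psi(0)=0\), the inclusion \([s_{t-1}-Y_t/n,\,s_{t-1}]\subseteq[s_t,s_{t-1}]\) gives \(Z_{t-1}-Z_t\ge Y_t/\mu_{k,C}(s_{t-1}-C/n)\). Hence \(\mathbb E[Z_{t-1}-Z_t\mid\mathcal F_{t-1}]\ge\mu_{k,C}(s_{t-1})/\mu_{k,C}(s_{t-1}-C/n)=1-o(1)\) uniformly. This uses uniform continuity of \(\mu_{k,C}\) and \(\mu_{k,C}\ge\min\{g(k),C\}>0\). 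The \(o(1)\) loss per step is absorbed into \(\varepsilon\); it is exactly the loss the paper's upper Riemann sum accounts for.

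\textbf{Second, the bounded-increment claim is false.} You assert that \(Y_t\le C\) gives \(|Z_{t-1}-Z_t|\le C/\mu_{\min}\). But \(Z\) is built from the actual \(s_t\), and the actual drop \(\log_2(D_{t-1}/D_t)\) is not bounded by \(C\). For example, if \(r_T(W_{t-1})=D_{t-1}-1\), it equals \(\log_2 D_{t-1}\), which can be of order \(n\). Truncation affects only \(Y_t\). So \(M_t\) has increments bounded above but not below, and Azuma--Hoeffding does not apply; one-sided bounds give no concentration in general.

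Fix this by running Azuma--Hoeffding on \(U_t=\min\{Z_{t-1}-Z_t,\,C/\mu_{\min}\}\) with \(\mu_{\min}=\min\{g(k),C\}\). These variables lie in \([0,C/\mu_{\min}]\), still dominate \(Y_t/\mu_{k,C}(s_{t-1}-C/n)\), and satisfy \(\sum_{t\le m\wedge\sigma}U_t\le Z_0=n\alpha_{k,C}\) by telescoping. The paper avoids the same issue by concentrating the truncated drops themselves and comparing with the actual decrease only afterwards.

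Also, your Azuma exponent should read \(\varepsilon^2n^2/(8c^2m)\); as written it does not tend to zero. With these repairs, your Step~3 finish (stop at \(D_t\le 2^C\), then use \(r_T(W)\ge 1\)) is correct.
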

	
	\begin{proof}
		Let \(W_t\) be the satisfying subspace after the first \(t\) constraints,
		and let
		\[
		D_t=\dim W_t.
		\]
		Let \(\mathcal F_t\) be the sigma-field generated by the first \(t\)
		chosen \(k\)-sets and their local covectors. While \(D_t>0\), define
		\[
		L_t=\log_2D_t,
		\qquad
		s_t=\frac{L_t}{n}.
		\]
		For the next constraint define the truncated drop by
		\[
		Y_{t+1}
		=
		\begin{cases}
		\min\{L_t-L_{t+1},C\}, & D_t>0\text{ and }D_{t+1}>0,\\[2mm]
		C, & D_t>0\text{ and }D_{t+1}=0,\\[2mm]
		0, & D_t=0.
		\end{cases}
		\]
		Thus \(0\le Y_{t+1}\le C\). Conditional on \(\mathcal F_t\) and on
		\(D_t>0\), the next set \(T_{t+1}\) is uniform and independent of
		\(\mathcal F_t\), and the next local covector is generic for the fixed
		subspace \(W_t\) almost surely. Hence the truncated drift lemma gives
		\[
		\mathbb E[Y_{t+1}\mid\mathcal F_t]
		\ge
		\mu_{k,C}(s_t).
		\]
		
		Choose \(\eta>0\) small enough that
		\[
		(1+\eta)\left(\alpha_{k,C}+\frac{\varepsilon}{4}\right)
		\le
		\alpha_{k,C}+\frac{\varepsilon}{2}.
		\]
		Since \(\mu_{k,C}\) is positive and continuous on \([0,1]\), the function
		\(1/\mu_{k,C}\) is Riemann integrable. Choose a finite partition
		\[
		1=s_0>s_1>\cdots>s_M=0
		\]
		fine enough that the corresponding upper Riemann sum satisfies
		\[
		\sum_{j=1}^M
		\frac{s_{j-1}-s_j}{\mu_{k,C}(s_{j-1})}
		\le
		\alpha_{k,C}+\frac{\varepsilon}{4}.
		\]
		Set
		\[
		\Delta_j=s_{j-1}-s_j,
		\qquad
		m_j=\mu_{k,C}(s_{j-1}).
		\]
		Since \(\phi_C\le g_C\le C\), we have \(0<m_j\le C\).
		
		Stage \(j\) begins once either the process has already been killed, or else
		\(s_t\le s_{j-1}\). It ends once either \(D_t=0\), or else
		\(s_t\le s_j\). During stage \(j\), before it has ended, we have
		\[
		s_t\in(s_j,s_{j-1}],
		\]
		and since \(\mu_{k,C}\) is nonincreasing,
		\[
		\mu_{k,C}(s_t)\ge \mu_{k,C}(s_{j-1})=m_j.
		\]
		Expose
		\[
		N_j=
		\left\lceil
		(1+\eta)\frac{\Delta_j n}{m_j}
		\right\rceil
		\]
		fresh constraints during stage \(j\).
		
		We now make the stopping argument explicit. During these \(N_j\) trials,
		define modified increments \(\widetilde Y_1^{(j)},\ldots,
		\widetilde Y_{N_j}^{(j)}\) as follows. Before stage \(j\) has ended,
		\(\widetilde Y_\ell^{(j)}\) is the actual truncated drop of the next
		constraint; after stage \(j\) has ended, set
		\(\widetilde Y_\ell^{(j)}=m_j\). Then
		\[
		0\le \widetilde Y_\ell^{(j)}\le C
		\]
		and, with respect to the natural stopped filtration during this stage,
		\[
		\mathbb E[\widetilde Y_\ell^{(j)}\mid\mathcal G_{\ell-1}^{(j)}]
		\ge m_j.
		\]
		Therefore the variables
		\[
		Z_\ell^{(j)}=
		\widetilde Y_\ell^{(j)}-
		\mathbb E[\widetilde Y_\ell^{(j)}\mid\mathcal G_{\ell-1}^{(j)}]
		\]
		form a martingale-difference sequence with \(|Z_\ell^{(j)}|\le C\). Since
		\[
		N_jm_j\ge (1+\eta)\Delta_j n,
		\]
		Azuma-Hoeffding gives
		\[
		\mathbb P
		\left(
		\sum_{\ell=1}^{N_j}\widetilde Y_\ell^{(j)}
		<
		\Delta_j n
		\right)
		\le
		\exp(-c_j n)
		\]
		for some constant \(c_j>0\) depending only on \(C,\eta,\Delta_j,m_j\), but
		not on \(n\).
		
		If stage \(j\) fails to end within these \(N_j\) constraints, then all
		modified increments are actual truncated drops. The actual logarithmic
		decrease during the stage is then less than \(\Delta_j n\), and the sum of
		the truncated drops is at most the actual logarithmic decrease. Hence the
		event
		\[
		\left\{\sum_{\ell=1}^{N_j}\widetilde Y_\ell^{(j)}<\Delta_j n\right\}
		\]
		occurs. Thus stage \(j\) fails with probability at most \(\exp(-c_j n)\).
		A union bound over the finitely many stages shows that all stages succeed
		with probability \(1-o(1)\).
		
		The number of constraints used by the stages is at most
		\[
		\sum_{j=1}^M N_j
		\le
		(1+\eta)n
		\sum_{j=1}^M
		\frac{\Delta_j}{\mu_{k,C}(s_{j-1})}
		+O(1).
		\]
		By the choice of the partition and of \(\eta\), this is at most
		\[
		\left(\alpha_{k,C}+\frac{\varepsilon}{2}\right)n+O(1).
		\]
		For all sufficiently large \(n\), this is at most
		\[
		\left(\alpha_{k,C}+\frac{3\varepsilon}{4}\right)n.
		\]
		If all stages succeed, then after these constraints either \(D_t=0\), or
		else \(s_t\le0\), hence \(D_t\le1\). If \(D_t=1\), then one further generic
		quantum \(k\)-SAT constraint kills the remaining one-dimensional subspace
		with probability one: for any nonzero vector \(\psi\in\mathcal H_n\) and
		any fixed \(T\), the map \(\varphi_T\mapsto(\varphi_T\otimes I_{\bar T})\psi\)
		is not identically zero. The set of covectors annihilating \(\psi\) is
		therefore a proper linear subspace of \(\mathcal H_T^*\).
		
		This extra constraint is covered by the remaining linear slack: for large
		\(n\),
		\[
		\left(\alpha_{k,C}+\frac{3\varepsilon}{4}\right)n+1
		\le
		(\alpha_{k,C}+\varepsilon)n.
		\]
		Thus any instance with at least \((\alpha_{k,C}+\varepsilon)n\) constraints
		is unsatisfiable with probability \(1-o(1)\).
	\end{proof}

\subsection{Completion of the proof}\label{sec:completion}

\subsubsection{Removing the truncation}
	
	\begin{lemma}[Limit of the truncated time]
		As \(C\to\infty\),
		\[
		\alpha_{k,C}
		\downarrow
		\alpha_\star(k),
		\]
		where
		\[
		\alpha_\star(k)
		=
		\int_0^1
		\frac{ds}{
			\log_2\left(\frac1{1-2^{-ks}}\right)
		}.
		\]
	\end{lemma}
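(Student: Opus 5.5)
The plan is to first identify $\phi_C$ explicitly, and then pass to the limit $C\to\infty$ by monotone convergence.

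\textbf{Step 1: $\phi_C=g_C$.} On $(0,\infty)$ the function $g$ is convex and decreasing. Indeed $g(h)=-\log_2(1-2^{-h})$ has derivative
\[
g'(h)=-\frac{2^{-h}}{1-2^{-h}}=-\frac{1}{2^{h}-1}<0,
\]
and $g'$ is increasing in $h$, so $g$ is convex. Let $h_C$ be the unique point with $g(h_C)=C$. Then $g_C$ equals the constant $C$ on $[0,h_C]$ and equals $g$ on $[h_C,k]$. This function is nonincreasing. It is also convex, since it is the maximum of the two convex functions $g$ and the constant $C$ on $(0,k]$, and the convention $g_C(0)=C$ agrees with the limit at $0^+$. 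Being convex and nonincreasing, $g_C$ is itself an admissible minorant, so the largest such minorant is $\phi_C=g_C$. Consequently
\[
\mu_{k,C}(s)=\min\{g(ks),C\}
\]
for $s\in(0,1]$, with value $C$ at $s=0$.

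\textbf{Step 2: monotonicity.} For each fixed $s\in(0,1]$, the quantity $\min\{g(ks),C\}$ is nondecreasing in $C$ and converges to $g(ks)$ as $C\to\infty$. Hence $1/\mu_{k,C}(s)$ decreases pointwise to $1/g(ks)$, and $\alpha_{k,C}$ is nonincreasing in $C$.

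\textbf{Step 3: passing to the limit.} The main obstacle is that this is a decreasing limit, so the monotone convergence theorem does not apply directly; we need one $C$ for which the integral is finite, to serve as a dominating function. Fix $C_0>0$. For $C\ge C_0$ we have
\[
\frac{1}{\mu_{k,C}(s)}\le \frac{1}{\mu_{k,C_0}(s)}=\max\Bigl\{\frac{1}{g(ks)},\frac{1}{C_0}\Bigr\}.
\]
The function $1/g(ks)$ is continuous on $(0,1]$ and tends to $0$ as $s\to0^+$, because $g(h)\to\infty$ as $h\to0^+$. It is therefore bounded on $(0,1]$. In fact, since $g(ks)\ge g(k)>0$ for all $s\in(0,1]$, we get $1/g(ks)\le 1/g(k)$. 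So the dominating function is bounded, and hence integrable on $[0,1]$.

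Dominated convergence (or monotone convergence applied to the decreasing sequence) then gives
\[
\alpha_{k,C}\downarrow\int_0^1\frac{ds}{g(ks)}=\alpha_\star(k),
\]
which is finite. The only point needing care is the identity $\phi_C=g_C$, which relies on the convexity of $g$ established in Step 1; beyond that, the argument is routine.
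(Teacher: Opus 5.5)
\textbf{Step 1 is false, and everything after it depends on it.} The truncation is $g_C=\min\{g,C\}$. You justified its convexity by calling it the \emph{maximum} of $g$ and the constant $C$. The minimum of a convex function and a constant is not convex in general, and here it fails for every $C>g(k)$. On $[0,h_C)$ the slope of $g_C$ is $0$, and at $h_C$ it drops to $g'(h_C)=-1/(2^{h_C}-1)<0$, which is a concave kink. Concretely, for $h_2\in(h_C,\min\{2h_C,k\}]$,
\[
g_C(h_2/2)=C>\tfrac12\bigl(g_C(0)+g_C(h_2)\bigr).
\]
More generally, a convex nonincreasing function that is constant on an initial interval must be constant.

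So $\phi_C\neq g_C$. For large $C$ the true $\phi_C$ follows the line through $(0,C)$ tangent to $g$ at some $h^\ast>h_C$, and it lies strictly below $g_C$ on $(0,h^\ast)$. This is exactly why the paper introduces the largest convex nonincreasing minorant: the Jensen step in the truncated drift lemma needs convexity, which $g_C$ lacks. Your formula $\mu_{k,C}(s)=\min\{g(ks),C\}$ is therefore wrong, and the pointwise convergence $\mu_{k,C}(s)\to g(ks)$ in Step 2 is not established. Knowing $g_C\uparrow g$ only gives the upper bound $\phi_C\le g$. Without a matching lower bound, the limit of $\alpha_{k,C}$ could a priori be strictly larger than $\alpha_\star(k)$.

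\textbf{What is missing, and how the paper supplies it.} You need a convex nonincreasing minorant of $g_C$ that coincides with $g$ away from $0$. The paper builds one from tangent lines. For $\delta\in(0,k]$, let $\ell_\delta$ be the tangent to $g$ at $\delta$, and set $\psi_\delta=\ell_\delta$ on $[0,\delta]$ and $\psi_\delta=g$ on $[\delta,k]$.
\begin{itemize}
\item $\psi_\delta$ is nonincreasing and convex: its slope is $g'(\delta)<0$ on $[0,\delta]$ and at least $g'(\delta)$ afterwards.
\item $\psi_\delta\le g$, by convexity of $g$.
\item $\psi_\delta\le g_C$ once $C\ge \ell_\delta(0)$, which also gives $C\ge g(\delta)$, so the cap is inactive on $[\delta,k]$.
\end{itemize}
Hence $\phi_C=g$ on $[\delta,k]$ for all large $C$, and so $\phi_C(h)\to g(h)$ for every $h>0$.

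Your other two steps survive once rephrased for the true $\phi_C$. Monotonicity in $C$ follows because $g_{C_1}\le g_{C_2}$ makes the admissible class for $C_1$ a subclass of that for $C_2$. Domination follows because the constant $\min\{C_0,g(k)\}$ is an admissible minorant, so $1/\mu_{k,C}$ is uniformly bounded for $C\ge C_0$.

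The obstacle you flagged in Step 3, needing an integrable dominating function for a decreasing limit, is the easy part. The substantive point of this lemma is the convergence of the convex minorant.
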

	
	\begin{proof}
		If \(C_1\le C_2\), then \(g_{C_1}\le g_{C_2}\). Hence the admissible
		class of convex nonincreasing minorants for \(C_1\) is contained in the
		corresponding class for \(C_2\). Therefore \(\phi_{C_1}\le\phi_{C_2}\),
		\(\mu_{k,C_1}\le\mu_{k,C_2}\), and \(\alpha_{k,C}\) is nonincreasing in
		\(C\).
		
		For \(h>0\),
		\[
		g_C(h)\uparrow g(h)
		\qquad (C\to\infty).
		\]
		We claim that \(\phi_C(h)\to g(h)\) for every \(h\in(0,k]\). Fix
		\(\delta\in(0,k]\), and let \(\ell_\delta\) be the tangent line to \(g\) at
		\(\delta\). By convexity,
		\[
		\ell_\delta(h)\le g(h)
		\qquad (h>0).
		\]
		Define
		\[
		\psi_\delta(h)=
		\begin{cases}
		\ell_\delta(h), & 0\le h\le\delta,\\
		g(h), & \delta\le h\le k.
		\end{cases}
		\]
		This function is convex and nonincreasing on \([0,k]\). Since
		\(\ell_\delta\) is finite on \([0,\delta]\), for all sufficiently large
		\(C\) we have \(\psi_\delta\le g_C\) on \([0,k]\): on \([\delta,k]\) the
		cap is inactive, while on \([0,\delta]\) we choose \(C\) larger than
		\(\max_{[0,\delta]}\ell_\delta\). Hence \(\psi_\delta\) is an admissible
		minorant for all sufficiently large \(C\). Since also \(\phi_C\le g_C=g\)
		on \([\delta,k]\), we obtain
		\[
		\phi_C(h)=g(h)
		\qquad (h\in[\delta,k])
		\]
		for all sufficiently large \(C\). Because \(\delta>0\) is arbitrary, this
		proves \(\phi_C(h)\to g(h)\) for every \(h>0\).
		
		Therefore
		\[
		\mu_{k,C}(s)=\phi_C(ks)
		\to
		g(ks)
		\]
		for every \(s>0\). For all sufficiently large \(C\), namely \(C\ge g(k)\),
		the constant function \(g(k)\) is a convex nonincreasing minorant of
		\(g_C\). Thus
		\[
		\mu_{k,C}(s)\ge g(k)>0
		\]
		for all \(s\in[0,1]\) and all sufficiently large \(C\). Hence
		\[
		\frac1{\mu_{k,C}(s)}
		\le
		\frac1{g(k)}.
		\]
		Dominated convergence gives
		\[
		\alpha_{k,C}
		=
		\int_0^1\frac{ds}{\mu_{k,C}(s)}
		\to
		\int_0^1
		\frac{ds}{g(ks)}
		=
		\alpha_\star(k).
		\]
		Together with monotonicity, this proves \(\alpha_{k,C}\downarrow
		\alpha_\star(k)\).
	\end{proof}

\subsubsection{Transfer to the simple random hypergraph model}
	
	The proof above used the model where \(k\)-subsets are sampled
	independently with replacement. For fixed \(k\ge3\) and \(m=O(n)\), the
	probability of a repeated \(k\)-set is at most
	\[
	\binom m2 \binom nk^{-1}
	=
	O(n^{2-k})
	=
	o(1).
	\]
	Conditioned on no repetitions, the sampled set of constraints has the
	usual simple random \(k\)-uniform hypergraph distribution. Hence the same
	upper bound holds in the simple random quantum \(k\)-SAT model.
	
\subsubsection{Large-\(k\) asymptotics}
	
	With the change of variables
	\[
	u=ks,
	\]
	we have
	\[
	\alpha_\star(k)
	=
	\frac1k
	\int_0^k
	\frac{du}{
		\log_2\left(\frac1{1-2^{-u}}\right)
	}.
	\]
	As \(u\to\infty\),
	\[
	\log_2\left(\frac1{1-2^{-u}}\right)
	\sim
	\frac{2^{-u}}{\ln2}.
	\]
	Therefore
	\[
	\frac1{
		\log_2\left(\frac1{1-2^{-u}}\right)
	}
	\sim
	(\ln2)2^u.
	\]
	To justify the endpoint asymptotic, fix \(\rho>0\). For all sufficiently
	large \(U\), the integrand lies between
	\((1-\rho)(\ln2)2^u\) and \((1+\rho)(\ln2)2^u\) for \(u\ge U\). The
	contribution of \([0,U]\) is \(O_U(1)\), which is negligible compared with
	\(2^k\). Hence
	\[
	\int_0^k
	\frac{du}{
		\log_2\left(\frac1{1-2^{-u}}\right)
	}
	\sim
	\int_0^k (\ln2)2^u\,du
	=
	2^k-1.
	\]
	Consequently
	\[
	\alpha_\star(k)
	\sim
	\frac{2^k}{k}
	\qquad (k\to\infty).
	\]

	\begin{remark}
		For \(k=3\),
		\[
		\alpha_\star(3)
		=
		\int_0^1
		\frac{ds}{
			\log_2\left(\frac1{1-2^{-3s}}\right)
		}
		\approx 1.947.
		\]
		This value is obtained by direct one-dimensional numerical integration; a
		short Python script for evaluating \(\alpha_\star(k)\) is available at
		\[
		\text{\url{https://github.com/rjeanbernoulli/qksat-alpha-star}}.
		\]
	\end{remark}

\subsubsection{Proof of the main theorem}

\begin{proof}[Proof of Theorem~\ref{thm:main}]
Choose \(C\) large enough that
\[
\alpha_{k,C}\le \alpha_\star(k)+\frac{\varepsilon}{2}.
\]
The high-probability hitting proposition, applied with \(\varepsilon/2\) in place of \(\varepsilon\), shows that the with-replacement model is unsatisfiable with high probability once
\[
 m\ge (\alpha_\star(k)+\varepsilon)n.
\]
The transfer argument above gives the same conclusion in the simple random \(k\)-uniform hypergraph model. The asymptotic statement \(\alpha_\star(k)\sim 2^k/k\) was proved in the preceding subsection. This proves all claims of the theorem.
\end{proof}

\bigskip

\noindent\textsc{Jean Bernoulli Ravelomanana}

\noindent\emph{Email address:}
\href{mailto:rjeanbernoulli@gmail.com}{\texttt{rjeanbernoulli@gmail.com}}


\begin{thebibliography}{99}

\bibitem{Bravyi06}
S.~Bravyi,
\emph{Efficient algorithm for a quantum analogue of 2-SAT},
arXiv:quant-ph/0602108, 2006.

\bibitem{LMS10}
C.~R. Laumann, R.~Moessner, A.~Scardicchio, and S.~L. Sondhi,
\emph{Phase transitions and random quantum satisfiability},
Quantum Information \& Computation \textbf{10} (2010), no.~1--2, 0001--0015;
arXiv:0903.1904.

\bibitem{LLMSS10}
C.~R. Laumann, A.~M. L\"auchli, R.~Moessner, A.~Scardicchio, and S.~L. Sondhi,
\emph{Product, generic, and random generic quantum satisfiability},
Physical Review A \textbf{81} (2010), 062345;
arXiv:0910.2058.

\bibitem{BMR}
S.~Bravyi, C.~Moore, and A.~Russell,
\emph{Bounds on the quantum satisfiability threshold},
in \emph{Proceedings of Innovations in Computer Science (ICS)}, 2010,
pp.~482--489;
arXiv:0907.1297.

\bibitem{AKS}
A.~Ambainis, J.~Kempe, and O.~Sattath,
\emph{A quantum Lov\'asz local lemma},
Journal of the ACM \textbf{59} (2012), no.~5, Article~24.

\bibitem{SMLM16}
O.~Sattath, S.~C. Morampudi, C.~R. Laumann, and R.~Moessner,
\emph{When a local Hamiltonian must be frustration-free},
Proceedings of the National Academy of Sciences \textbf{113} (2016), no.~23,
6433--6437;
arXiv:1509.07766.

\bibitem{HLSZ19}
K.~He, Q.~Li, X.~Sun, and J.~Zhang,
\emph{Quantum Lov\'asz local lemma: Shearer's bound is tight},
in \emph{Proceedings of the 51st Annual ACM Symposium on Theory of Computing
(STOC)}, 2019, pp.~461--472;
arXiv:1804.07055.

\bibitem{MorampudiHSLM17}
S.~C. Morampudi, B.~Hsu, S.~L. Sondhi, R.~Moessner, and C.~R. Laumann,
\emph{Clustering in Hilbert space of a quantum optimization problem},
Physical Review A \textbf{96} (2017), 042303;
arXiv:1704.00238.

\bibitem{LoomisWhitney49}
L.~H. Loomis and H.~Whitney,
\emph{An inequality related to the isoperimetric inequality},
Bulletin of the American Mathematical Society \textbf{55} (1949), no.~10,
961--962.

\bibitem{MezardMontanari09}
M. M{\'e}zard and A. Montanari,
\emph{Information, Physics, and Computation},
Oxford University Press, Oxford, 2009.

\bibitem{Friedgut99}
E. Friedgut,
Sharp thresholds of graph properties, and the \(k\)-SAT problem,
\emph{Journal of the American Mathematical Society} \textbf{12} (1999),
no.~4, 1017--1054.

\bibitem{AchlioptasPeres04}
D. Achlioptas and Y. Peres,
The threshold for random \(k\)-SAT is \(2^k \log 2 - O(k)\),
\emph{Journal of the American Mathematical Society} \textbf{17} (2004),
no.~4, 947--973.

\bibitem{DingSlySun22}
J. Ding, A. Sly and N. Sun,
Proof of the satisfiability conjecture for large \(k\),
\emph{Annals of Mathematics} \textbf{196} (2022),
no.~1, 1--388.

\bibitem{AchlioptasMoore06}
D. Achlioptas and C. Moore,
Random \(k\)-SAT: two moments suffice to cross a sharp threshold,
\emph{SIAM Journal on Computing} \textbf{36} (2006),
no.~3, 740--762.

\bibitem{MezardParisiZecchina02}
M. M{\'e}zard, G. Parisi and R. Zecchina,
Analytic and algorithmic solution of random satisfiability problems,
\emph{Science} \textbf{297} (2002),
no.~5582, 812--815.

\bibitem{MertensMezardZecchina06}
S. Mertens, M. M{\'e}zard and R. Zecchina,
Threshold values of random \(K\)-SAT from the cavity method,
\emph{Random Structures \& Algorithms} \textbf{28} (2006),
no.~3, 340--373.

\bibitem{KrzakalaMontanariRicciSemerjianZdeborova07}
F. Krzakala, A. Montanari, F. Ricci-Tersenghi, G. Semerjian and L. Zdeborov{\'a},
Gibbs states and the set of solutions of random constraint satisfaction problems,
\emph{Proceedings of the National Academy of Sciences} \textbf{104} (2007),
no.~25, 10318--10323.

\bibitem{AchlioptasCojaOghlan08}
D. Achlioptas and A. Coja-Oghlan,
Algorithmic barriers from phase transitions,
in \emph{Proceedings of the 49th Annual IEEE Symposium on Foundations of Computer Science},
FOCS 2008, pp.~793--802.


\bibitem{GossetNagaj16}
D. Gosset and D. Nagaj,
Quantum \(3\)-SAT is \(\mathrm{QMA}_1\)-complete,
\emph{SIAM Journal on Computing} \textbf{45} (2016),
no.~3, 1080--1128. 

\bibitem{LMRV24}
J. Lee, N. Macris, J. B. Ravelomanana and P. Vantalon,
The PRODSAT phase of random quantum satisfiability,
arXiv:2404.18447, 2024.

\bibitem{KirkpatrickSelman94}
S. Kirkpatrick and B. Selman,
Critical behavior in the satisfiability of random Boolean expressions,
\emph{Science} \textbf{264} (1994), no.~5163, 1297--1301.

\bibitem{MonassonZecchina99}
R. Monasson, R. Zecchina, S. Kirkpatrick, B. Selman and L. Troyansky,
Determining computational complexity from characte\bibitem{KirkpatrickSelman94}
S. Kirkpatrick and B. Selman,
Critical behavior in the satisfiability of random Boolean expressions,
\emph{Science} \textbf{264} (1994), no.~5163, 1297--1301.

\bibitem{MonassonZecchina99}
R. Monasson, R. Zecchina, S. Kirkpatrick, B. Selman and L. Troyansky,
Determining computational complexity from characteristic phase transitions,
\emph{Nature} \textbf{400} (1999), 133--137.

\bibitem{KnillLaflamme97}
E. Knill and R. Laflamme,
Theory of quantum error-correcting codes,
\emph{Physical Review A} \textbf{55} (1997), 900--911.

\bibitem{TillichZemor14}
J.-P. Tillich and G. Zemor,
Quantum LDPC codes with positive rate and minimum distance proportional to
\(\sqrt n\),
\emph{IEEE Transactions on Information Theory} \textbf{60} (2014),
no.~2, 1193--1202.

\bibitem{PanteleevKalachev22}
P. Panteleev and G. Kalachev,
Asymptotically good quantum and locally testable classical LDPC codes,
in \emph{Proceedings of the 54th Annual ACM SIGACT Symposium on Theory of
Computing}, STOC 2022, pp.~375--388.

\end{thebibliography}
\end{document}